\newtheorem{lemma}{Lemma}
\newtheorem{theorem}{Theorem}
\definecolor{darkgreen}{rgb}{0,0.5,0}
\newcommand{\Asym}{A_{\mbox{\footnotesize sym}}}
\newcommand{\bsym}{\vec{b}_{\mbox{\footnotesize sym}}}
\renewcommand{\vec}[1]{{\bf #1}}
\newcommand{\sgn}{\text{sgn}}
\DeclareMathOperator{\spn}{span}
\definecolor{newcolor}{rgb}{.8,.349,.1}
\renewcommand{\vec}[1]{{\bf #1}}
\pgfplotsset{compat=1.18}
\begin{document}

\title[Rational Krylov for decoding edge-based compressed images]{An extended Krylov subspace method for decoding edge-based compressed images by homogeneous diffusion}

\author{Volker Grimm}
\address{Karlsruhe Institute of Technology (KIT), Institute for Applied and Numerical Mathematics, D-76131 Karlsruhe, Germany}
\email{Volker.Grimm$@$kit.edu}

\author{Kevin Liang}

\subjclass[2020]{Primary, 65F60, secondary 94A08}

\keywords{Rational Krylov subspace method, multigrid method, inpainting, time integration}

\date{\today}

\begin{abstract}
The heat equation is often used in order to inpaint dropped data in inpainting-based lossy compression schemes. We propose an alternative way to numerically solve the heat equation by an extended Krylov subspace method. The method is very efficient with respect to the computation of the solution of the heat equation at large times. And this is exactly what is needed for decoding edge-compressed pictures by homogeneous diffusion.
\end{abstract}

\maketitle

\section{Introduction}\label{intro}
Inpainting-based compression of images refers to the idea to identify prominent data in an image and to only store this data. All other data is disregarded and, when needed, reconstructed by inpainting.
In particular, we will consider edge-based compressed images, where the edges of an image together with adjacent grey/colour data are stored (cf.~\cite{Carlsson88,Elder99,HuMo89,ZeRot86}).
This idea can be seen as a second-generation image coding method where the properties of the human visual system are taken into account (cf.~\cite{Reidetal97}). The edge-based compression works very well for cartoon-like images (cf.~\cite{Mainbergeretal11}).
In order to improve the quality of the reconstruction for natural images, we also compress  images based on dithering. Dithering also works due to the perception of images by the human visual cortex.
We will work with these two basic coding techniques. But since our new contribution refers to the decoding, more advanced coding techniques (cf.~\cite{Hoeltgenetal13,Mainbergeretal12}) can easily be combined with our approach. We would also like to mention that, while our proposal deals with homogeneous diffusion, the proposed method might be carried over to
more general linear evolution equations, e.g. in image registration (e.g.~\cite{Grietal06}),
and to nonlinear partial differential equations used for inpainting by the help of exponential integrators, in which the linear part is solved as proposed in this work. More information on exponential integrators might be found in the survey by Hochbruck and Ostermann \cite{hoacta10} and information on advanced image inpainting methods by partial differential equations can be found in Schoenlieb's book \cite{inpaintingcarola15}.

In order to review the basic idea of inpainting-based compression of images, let $f:\Omega \rightarrow \mathbb{R}$ be a given grey-scale picture. $f(\vec{x})$ refers to the
intensity of light and $\Omega$ is the rectangular domain of the picture. In a colour picture, any channel is treated in the same way.
After the compression of the picture, the intensities are only known on a subdomain $K \subset \Omega$. This splits the image in a know part $K$ and an unknown part $\Omega \backslash K$. To flag the stored pixels in an efficient way, we will use the function
\[
  c(\vec{x})=
  \left\{
    \begin{array}{ll}
     1 & \mbox{for}~\vec{x} \in K, \\
     0 & \mbox{for}~\vec{x} \in \Omega \backslash K,
    \end{array}
  \right.
\]
which we will refer to as \emph{inpainting mask}. With the help of the inpainting mask, the compressed image $f_c$ can be written as $f_c=cf$. In the middle of Figure~\ref{overallidea}, such a compressed picture of the picture on the left-hand side is shown. Only the pixels that are not black are stored. This data is sufficient for the reconstruction on the right-hand side of Figure~\ref{overallidea}.
\begin{figure}[t]
        \centering
        \subfigure[original] {\includegraphics[height=5cm]{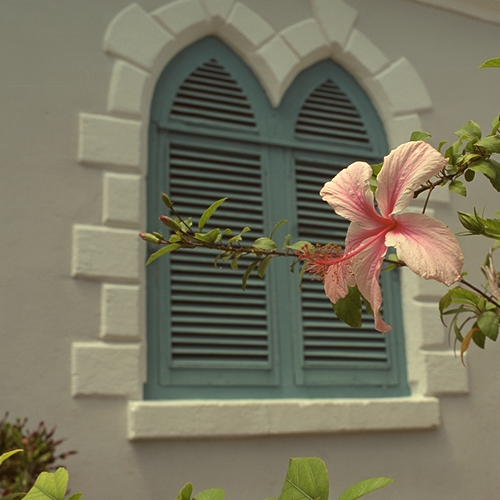}} \hfill
        \subfigure[compression] {\includegraphics[height=5cm]{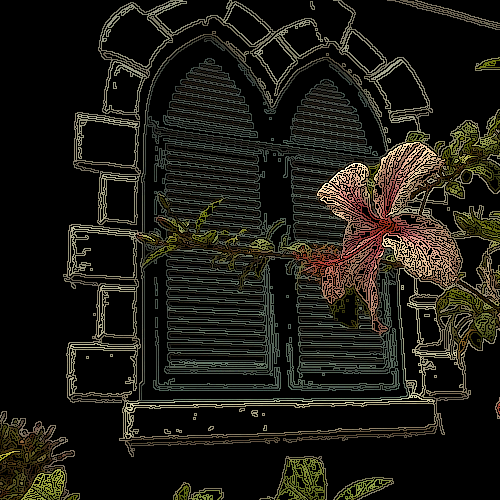}} \hfill
        \subfigure[reconstruction] {\includegraphics[height=5cm]{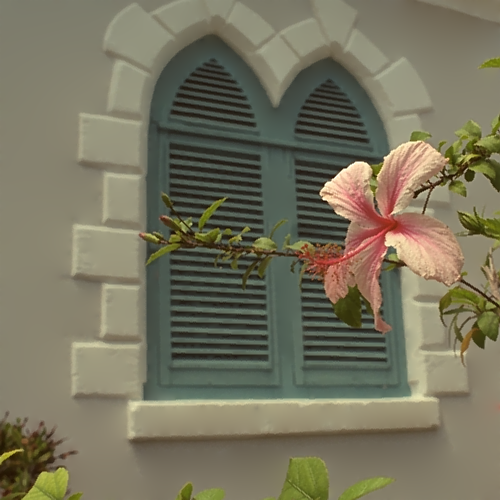}}
        \caption{Sketch of the compression scheme: in the encoding step the original picture is reduced to a subset, in the decoding phase the  original is reconstructed from this subset.} \label{overallidea}
\end{figure}
For the reconstruction, we inpaint the missing data by the heat equation. The system reads
\begin{align*}
        \partial_t u(\vec{x},t) &=(1-c(\vec{x}))\Delta u(\vec{x},t), && \mbox{in}~\Omega \times [0,\infty), \\
        \partial_n u(\vec{x},t)&=0, && \mbox{in}~\partial \Omega \times [0,\infty),
\end{align*}
with Neumann boundary conditions and the compressed image
\[
        u(\vec{x},0)=c(\vec{x})f(\vec{x}), \qquad \vec{x} \in \Omega,
\]
as initial data. The reconstructed picture is the solution $u(\vec{x},t)$ of the above PDE at a large time $t$.
The new contribution is an efficient method to solve the discretised heat equation at a prescribed time $t$, directly. The discretisation of the heat equation leads to a huge system of ordinary differential equations
\begin{equation} \label{ode}
  \vec{y}' = A\vec{y}, \qquad \vec{y}(0)=\vec{b}\,,
\end{equation}
where $\vec{b}$ are the pixels of the compressed image written as a vector. The reconstructed image $\vec{y}(t)$ at time $t$ is given by the matrix exponential times the vector $\vec{b}$, i.e. $\vec{y}(t)=e^{tA}\vec{b}$.

Recently, rational Krylov subspace methods have been found to be an excellent choice for the approximation of the matrix exponential, that is, the solution of \eqref{ode} at time $t$.
Rational Krylov subspaces have been considered first by Axel Ruhe (cf. \cite{Ruhe84,Ruhe98}). They also turned out to be useful in inverse problems (e.g.~\cite{BreNoRe12,BuDoRei17,invkry,RamRei19}).
If $A$ is an operator or an arbitrarily large matrix with a field-of-values in the left complex half-plane, the matrix exponential times a vector can be approximated reliably for an arbitrary time $t>0$ under reasonable assumptions on the vector (cf. \cite{GG13,GG14,ratkryphi11,GGautosmooth17}).
If the matrix $A$ is symmetric and the field-of-values is on the negative real axis, then rational Krylov methods are known that converge fast without any restrictions on the vector (cf. \cite{And81,marlis_jasper}). For finite subintervals of the negative real line, even super-linear convergence is obtained (cf. \cite{beckermann_guettel12}).

Our matrix $A$ is not symmetric, but nevertheless allows for the use of a well-chosen rational Krylov subspace such that a fast convergence is obtained.
We will approximate the solution of system \eqref{ode} in extended Krylov subspaces of the form
\begin{equation} \label{exKryIntro}
 \mathcal{E}_{m-1}((\gamma I-A)^{-1},\vec{b}) = \spn \{ \vec{b}, A\vec{b}, (\gamma I-A)^{-1}\vec{b}, \cdots, (\gamma I-A)^{-m+2}\vec{b}\}\,,
 \nonumber
\end{equation}
where $\gamma > 0$ (cf. \cite{Druskin_Knizhnerman98,GG13,KnizhSimoncini09}).
After the computation of an orthonormal basis $V_m \in \mathbb{R}^{n \times m}$ of this space and the compression $S_m=V_m^TAV_m$ of the huge matrix $A$ to a small $m \times m$ matrix, the \emph{Krylov approximation} $\vec{f}_m$ is given by
\begin{equation} \label{kryapprox}
        \vec{f}_m = \|\vec{b}\|V_me^{tS_m}\vec{e}_1, \qquad
        \vec{e}_1= \begin{pmatrix} 1, 0, \cdots, 0 \end{pmatrix}^T \in \mathbb{R}^m\,.
\end{equation}
This way, the solution of the huge system \eqref{ode} is reduced to the solution of a small system of size $m \times m$, that is, to the computation of $e^{tS_m}\vec{e}_1$. For this purpose, methods for small matrices can be used (cf. \cite{AlMohyHigham11,highambook}).
It will turn out, that a small $m$ is sufficient for arbitrary large matrices $A$.
The choice of the subspace that leads to this favourable property is intricate due to two restrictions. A good error estimate is necessary in order to estimate the accuracy of the method and the computation of the vectors $(\gamma I-A)^{-m}\vec{b}$ requires the efficient solution of linear systems. This can be done by multigrid methods.

The paper is organised as follows: After the introduction in this section, the encoding of pictures is briefly discussed in Section~\ref{sec:encoding}. In Section~\ref{sec:discretization}, the discretisation of the heat equation is described. The new decoding scheme and the  main result that it works independent of the size of the picture for a given large time $t$ is shown in Section~\ref{sec:decoding}. The multigrid method adapted to our purposes as an efficient method to solve the linear systems is discussed in Section~\ref{sec:multigrid}. Numerical experiments with the decoding scheme as illustration of our method are conducted in Section~\ref{sec:numex}. Here and everywhere else, we will use pictures of the Kodak lossless colour image suite (cf. \cite{Kodak}). The work closes with a brief conclusion as Section~\ref{sec:conclusion}.

\section{Encoding} \label{sec:encoding}
In this section, we briefly describe the encoding. The basic idea is to determine a binary mask $\vec{c}$ of the same size as the picture that indicates which of the grey/colour values are stored. The choice of this mask determines the compression. The fewer pixels we have to store the higher will be the compression. The choice of the mask is also important for the obtainable quality of the reconstructed image. We consider two basic methods in order to determine a good mask for a later inpainting of the picture. For simplicity, we will not consider the generation of more elaborate masks by advanced coding techniques (cf.~\cite{Hoeltgenetal13,Mainbergeretal12}). Our new decoding algorithm works with any mask $\vec{c}$. For the demonstration of our approach, the two basic masks will suffice.
\subsection{Edge detection}
Edges are very important for the perception of images by the human brain (cf. \cite{Marr76}). Therefore, one often starts with detecting edge information in an image. One classic and often used idea is to identify edges as zero-crossings of the Laplacian of an image that has been smoothed by a Gaussian filter, the Marr-Hildreth edge detector (cf.~\cite{MarrHildreth80}). For a colour picture, $\vec{f}=(f_1,f_2,f_3)^T$, the Laplacian is defined as the sum of the Laplacians over all channels:
\[
  \Delta \vec{f} = \sum_{k=1}^3 \Delta f_k\,.
\]
So, the idea is to store the pixels with large absolute value of this Laplacian.
In order to remove zero-crossings that arise from small oscillations in the image, the magnitude of the gradient $\nabla \vec{u}$ of the image at every pixel is used in addition. All edges are removed where the gradient is below a certain threshold. This is basically the idea of the Canny edge detector (cf. \cite{Canny86}).
\subsection{Dithering}
If one uses the above idea for natural images, the highly textured parts of the image are strongly emphasised in contrast to the background. This can be seen in Figure~\ref{laplacian}.
The background seems to be too blurred.
In order to improve the representation of smoother regions, it is proposed to choose the edge data proportional to the absolute value of the Laplacian (cf.~\cite{Belhachmietal09}). In order to follow this suggestion, we use Floyd--Steinberg dithering, cf.~\cite{FloStein76}, for the modulus of the Laplacian.
This method also allows for a simple method to prescribe the percentage of the pixels to be stored. For example, if one wishes to store $10\%$ of the pixels, the largest modulus of the Laplacian in the picture is scaled such that the average corresponds to a tenth of the value of a white pixel. If the maximal value of a white pixel is $255$, the average corresponds to $0.1 \times 255 = 25.5$. In the course of the Floyd--Steinberg dithering, about 10\% of the pixels will automatically be stored.
\begin{figure}[t]
        \centering
        \subfigure[original] {\includegraphics[height=5cm]{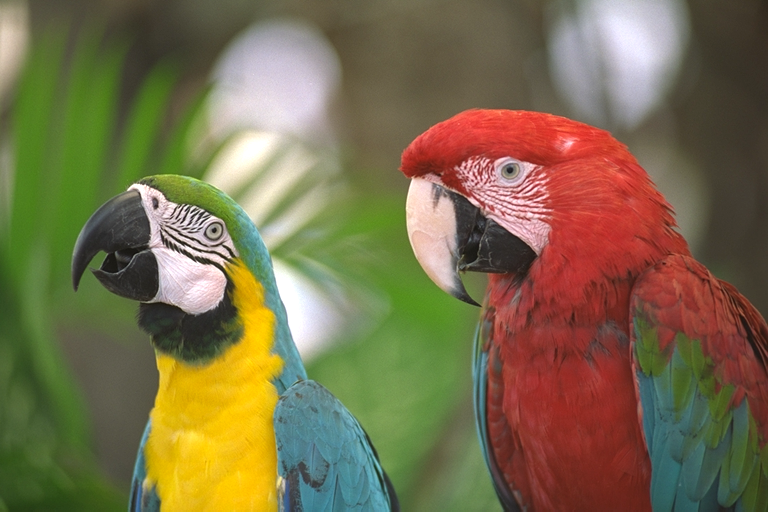}} \quad
        \subfigure[modulus Laplacian] {\includegraphics[height=5cm]{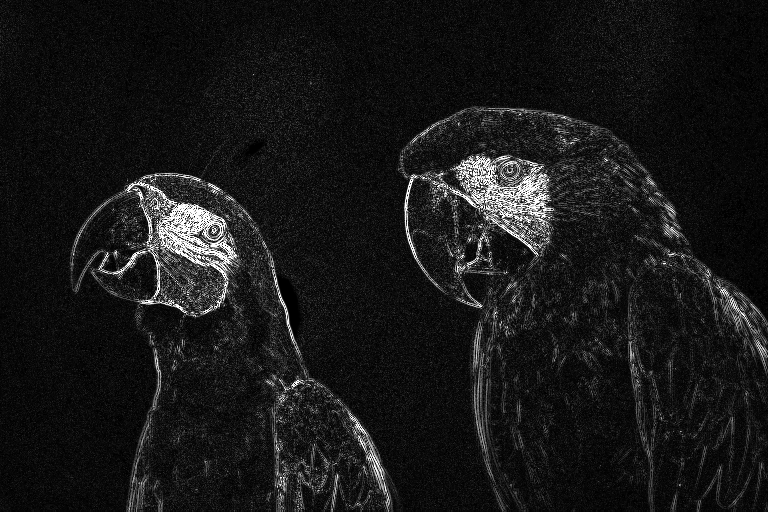}}\\
        \subfigure[threshold (10\%)] {\includegraphics[height=5cm]{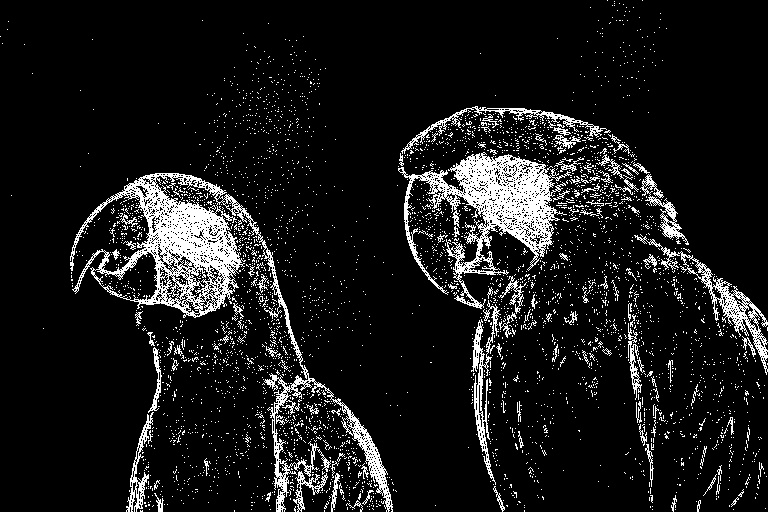}} \quad
        \subfigure[reconstruction] {\includegraphics[height=5cm]{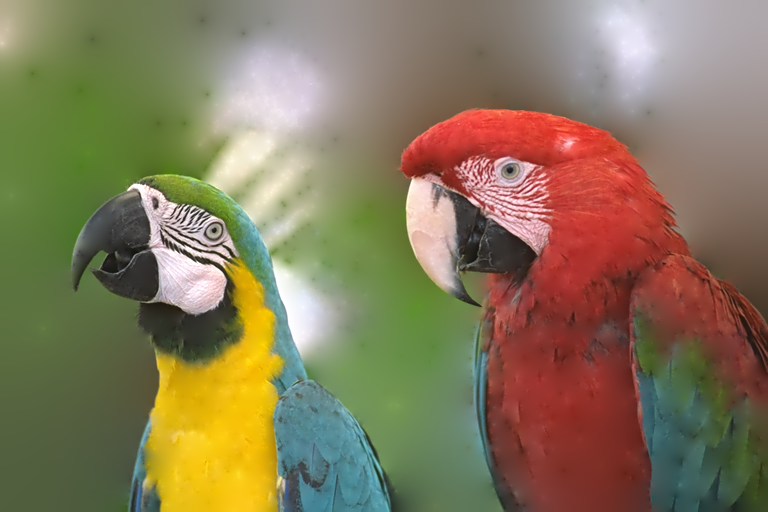}}
        \caption{The mask has been chosen such that 10\% of the pixels with the largest modulus of the Laplacian are contained.}
        \label{laplacian}
\end{figure}
\begin{figure}[t]
        \centering
        \subfigure[dithering (10\%)] {\includegraphics[height=5cm]{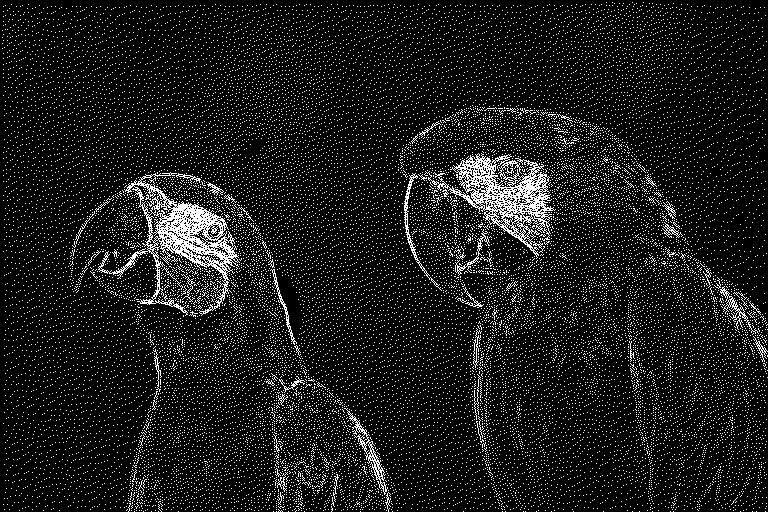}} \quad
        \subfigure[reconstruction] {\includegraphics[height=5cm]{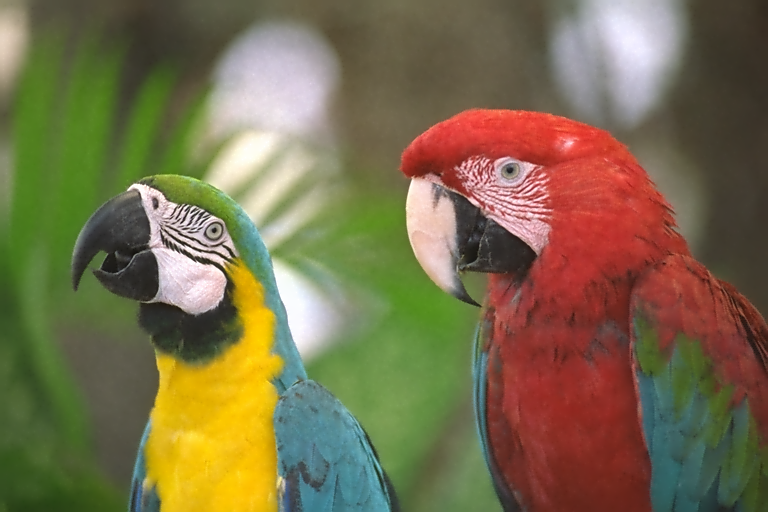}}
        \caption{The mask has been chosen such that the modulus of the Laplacian has been scaled to $10\%$ and dithering has been applied.}
        \label{dithermask}
\end{figure}

In the following, we will refer to the first method as \emph{edge-based} compression and the second one as \emph{dithering-based} compression. Dithering improves the display of the background in the reconstructed image as can be seen in Figure~\ref{dithermask}. The edge-based compression is superior for cartoon-like images, vector graphics, pictograms, and letters, where the edges are the most important image feature. The dithering might be superior for natural images, as shown above.
\subsection{Further compression}
If one saves all important edges, one stores more pixels as with the dithering approach. But this is mitigated by the fact that a subsampling of the edge-information is possible (cf. \cite{Mainbergeretal11}). The two-dimensional signal is transformed to a one-dimensional one. Afterwards, since the one-dimensional signal shows more continuity, only every $d$-th value is stored.
The masks generated by the edge-based approach are better suited for subsampling than the ones generated by dithering. A presmoothing to the one-dimensional signals might be applied in order to improve the subsampling even more.

The subsampled data is further compressed by uniform quantisation and then by applying an entropy coder.
Since we mostly use the programming language {\tt Python} in our numerical experiments, we applied the function {\tt savez\_compressed} of Python's NumPy library as the entropy coder.


\section{Discretisation} \label{sec:discretization}

For decoding the compressed pictures, the $N_x \times N_y$ pixels of the original picture are interpreted as a finite-difference approximation to the heat equation. The mask $\vec{c}$ turns into a binary mask, where $1$ indicates that the pixel has been stored. Every such pixel is treated as a discretised Dirichlet boundary condition with the pixel value as the boundary value. Any boundary pixel, which is not a stored pixel, is treated as a homogeneous Neumann boundary. The Laplacian is discretised by the standard stencil
\begin{equation}
\Delta^{\vec{h}} \mathrel{\hat =}
\begin{bmatrix}
0 & \frac 1 {h_x^2} & 0\\
\frac 1 {h_y^2} & -\left(\frac 2 {h_x^2} + \frac 2 {h_y^2}\right) & \frac 1 {h_y^2}\\
0 & \frac 1 {h_x^2} & 0
\end{bmatrix}
\end{equation}
with grid constants $h_x,h_y$ in $x,y-$direction, respectively. The stencil is applied at any pixel which has not been stored.
As usual in image processing, we will assume that the grid constants are one in both directions on the finest grid, which corresponds to the original picture.
We illustrate the discretisation by the small example in Figure~\ref{example3times2}. The grey pixels indicate the stored pixels, that is, the Dirichlet boundary data, where the mask $\vec{c}$ is one. The discretised heat equation reads
\begin{equation} \label{heatdisc}
  \vec{y}'=A\vec{y}, \qquad \vec{y}(0)=\vec{b},
\end{equation}
where $A$ and $\vec{b}$ are as follows,
\[
A=\begin{bmatrix}
-2 &  1 & 0 &  1 & 0 &  0 \\
 1 & -3 & 1 &  0 & 1 &  0 \\
 0 &  0 & 0 &  0 & 0 &  0 \\
 1 &  0 & 0 & -2 & 1 &  0 \\
 0 &  0 & 0 &  0 & 0 &  0 \\
 0 &  0 & 1 &  0 & 1 & -2
\end{bmatrix}\,,
\qquad
\vec{b}=\begin{bmatrix}
0 \\
0 \\
f_{3}^h \\
0 \\
f_{5}^h \\
0 \\
\end{bmatrix}.
\]
The matrix $R$,
\[
R =
\begin{bmatrix}
 1 & 0 & 0 & 0 & 0 & 0 \\
 0 & 1 & 0 & 0 & 0 & 0 \\
 0 & 0 & 0 & 1 & 0 & 0 \\
 0 & 0 & 0 & 0 & 0 & 1
\end{bmatrix} \,,
\]
selects, by multiplication from the left-hand side, the rows of the matrix $A$ that correspond to inner pixels where the Laplacian stencil is applied. $R^T$ blows a vector that corresponds to the inner pixels up to the full size of our picture while setting the boundary pixels to zero. The projector $P=R^TR$ projects to the orthogonal complement of the space spanned by $\vec{b}$. As a consequence $PA=A$.
With the help of the matrices $R$ and $R^T$, respectively, a symmetric matrix $\Asym$ can be extracted from the matrix $A$ as well as a reduced vector $\bsym$,
\[
\Asym=RAR^T=
\begin{bmatrix}
 -2 &  1 &  1 &  0 \\
  1 & -3 &  0 &  0 \\
  1 &  0 & -2 &  0 \\
  0 &  0 &  0 & -2
\end{bmatrix},
\qquad
\bsym=RA\vec{b}\,,
\]
that allow for an alternative representation of the solution of system \eqref{heatdisc} given in Lemma~\ref{exactsolrep}.
\begin{figure}
\caption{Example of image with dimension $3 \times 2$ and its row-major numbering} \label{example3times2}

\hspace{0.25cm}

\begin{center}
\begin{tikzpicture}[scale=1.0]
\filldraw[lightgray] (1,0) rectangle (2,1);
\filldraw[lightgray] (2,1) rectangle (3,2);
\draw[thick] (0,0) rectangle (3,2);
\draw[thick] (0,1) -- (3,1);
\foreach \x in {1,2}
   \draw[thick] (\x,0) -- (\x,2);
\node at (0.5,0.5) {$u_4^h$};
\node at (1.5,1.5) {$u_2^h$};
\node at (0.5,1.5) {$u_1^h$};
\node at (1.5,0.5) {$f_5^h$};
\node at (2.5,1.5) {$f_3^h$};
\node at (2.5,0.5) {$u_6^h$};
\end{tikzpicture}
\hspace{0.5cm}
\begin{tikzpicture}[scale=1.0]
\draw[thick] (0,0) rectangle (3,2);
\draw[thick] (0,1) -- (3,1);
\foreach \x in {1,2}
   \draw[thick] (\x,0) -- (\x,2);
\node at (0.5,0.5) {$4$};
\node at (1.5,1.5) {$2$};
\node at (0.5,1.5) {$1$};
\node at (1.5,0.5) {$5$};
\node at (2.5,1.5) {$3$};
\node at (2.5,0.5) {$6$};
\end{tikzpicture}
\end{center}
\end{figure}
\begin{lemma} \label{exactsolrep}
The exact solution of equation \eqref{heatdisc} can be written as
\begin{equation} \label{altsolrep}
   \vec{y}(t) = e^{t A}\vec{b} = \vec{b} + R^T \left( t \varphi_1(t\Asym)\bsym \right)\,,
\end{equation}
where $\varphi_1(z)=(e^z-1)\slash z$.
\end{lemma}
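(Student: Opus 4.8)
The plan is to work directly from the power-series definitions and to exploit the two structural facts recorded just before the statement, namely $PA=A$ with $P=R^TR$, together with the definitions $\bsym=RA\vec{b}$ and $\Asym=RAR^T$. The scalar identity $z\varphi_1(z)=e^z-1$ lifts verbatim to matrices through the power series $\varphi_1(z)=\sum_{k\ge 0} z^k/(k+1)!$, so that $e^{tA}=I+tA\varphi_1(tA)$ and hence
\[
e^{tA}\vec{b}=\vec{b}+\sum_{k\ge 1}\frac{t^k}{k!}A^k\vec{b}.
\]
It therefore suffices to rewrite each term $A^k\vec{b}$ in terms of the small symmetric matrix $\Asym$ and the reduced vector $\bsym$.

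The heart of the argument is the claim that
\[
A^k\vec{b}=R^T\Asym^{\,k-1}\bsym\qquad\text{for all }k\ge 1,
\]
which I would establish by induction on $k$. For $k=1$ this is immediate: $R^T\bsym=R^TRA\vec{b}=PA\vec{b}=A\vec{b}$, using $PA=A$. The inductive step rests on the single identity $R^T\Asym=AR^T$, which follows by the same mechanism, $R^T\Asym=R^TRAR^T=PAR^T=AR^T$. Granting this, one pushes a factor of $A$ through: assuming $A^k\vec{b}=R^T\Asym^{\,k-1}\bsym$,
\[
R^T\Asym^{k}\bsym=(R^T\Asym)\Asym^{\,k-1}\bsym=AR^T\Asym^{\,k-1}\bsym=A\cdot A^{k}\vec{b}=A^{k+1}\vec{b},
\]
closing the induction. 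Since all matrices are finite, the exponential series converges absolutely and the forthcoming term-by-term manipulations are legitimate.

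Substituting this back and reindexing with $j=k-1$ gives
\[
e^{tA}\vec{b}=\vec{b}+R^T\!\left(\sum_{j\ge 0}\frac{t^{\,j+1}}{(j+1)!}\Asym^{\,j}\bsym\right)=\vec{b}+R^T\!\left(t\,\varphi_1(t\Asym)\bsym\right),
\]
which is exactly \eqref{altsolrep}. The only real obstacle is the identity $R^T\Asym=AR^T$ (equivalently the inductive passage from $A^k$ applied to $\vec{b}$ to $\Asym^{\,k-1}$ applied to $\bsym$); everything else is bookkeeping with the exponential series. That identity is precisely where the projection property $PA=A$ — the fact that the stored pixels contribute zero rows to $A$ — does the essential work, and I would be careful to invoke only the available relation $R^TR=P$ rather than leaning on a cancellation of the form $RR^T=I$, since $R$ is a tall selection matrix.
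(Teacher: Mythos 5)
Your proof is correct and follows essentially the same route as the paper: both rest on the identity $e^{z}=1+z\varphi_1(z)$ together with the intertwining of functions of $PA=R^TRA$ with functions of $\Asym=RAR^T$ through $R^T$. The only difference is that the paper cites the general identity $f(XY)X=Xf(YX)$ (corollary~1.34 in Higham's book), whereas you reprove the needed special case from scratch by induction on matrix powers via $R^T\Asym=AR^T$, which is a perfectly sound, self-contained substitute.
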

\begin{proof}
It is well known that the matrix exponential solves the ordinary differential equation \eqref{heatdisc}. Hence, from here, we obtain
\[
  \vec{y}(t)=e^{tA}\vec{b}=\vec{b}+t\varphi_1(tA)A\vec{b} = \vec{b}+t\varphi_1(tPA)PA\vec{b}
  = \vec{b} + R^T \left( t\varphi_1(t\Asym) \bsym\right)\,,
\]
by $PA=A$ and the fact that
\[
  \varphi_1(R^TRA)R^T =R^T\varphi_1(RAR^T)
\]
according to Corollary~1.34 on page $21$ in Higham's book (cf.~\cite{highambook}).
\end{proof}
In the alternative representation \eqref{altsolrep} of the solution, one can see that the stored pixels in $\vec{b}$ are never altered due to the properties of $R^T$, which is also true for the exact solution of \eqref{heatdisc}, of course.
It follows by the Gershgorin disk theorem that
$\Asym$ has only negative eigenvalues and hence is invertible.
(Strictly speaking, $\Asym$ has only negative eigenvalues as soon as at least one boundary pixel exists in the interior of the rectangular domain of the picture.)
From the stencil, one can easily read off that the matrix $\Asym$ is symmetric.
Based on these facts, the following theorem shows that $(\gamma I - A)$ is invertible for all $\gamma > 0$, which is crucial for our decoding method.
\begin{lemma}\label{gImArep} For $\gamma > 0$, we have
\[
(\gamma I - A)^{-1} = \tfrac{1}{\gamma}I +\tfrac{1}{\gamma}R^T(\gamma I-\Asym)^{-1}RA\,.
\]
\end{lemma}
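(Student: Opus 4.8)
The plan is to verify the claimed formula directly: multiply $(\gamma I - A)$ against the proposed right-hand side and check that the product is the identity. Since $\gamma I - A$ is a finite square matrix, exhibiting a right inverse already proves invertibility and simultaneously identifies the inverse, so no separate existence argument is needed. Before multiplying, I would collect the two structural facts that drive the cancellation. From the relation $PA = A$ recorded above, that is $R^TRA = A$, together with the definition $\Asym = RAR^T$, one obtains the single key identity
\[
AR^T = R^TRAR^T = R^T\Asym,
\]
and hence $(\gamma I - A)R^T = \gamma R^T - R^T\Asym = R^T(\gamma I - \Asym)$. I would also observe that $(\gamma I - \Asym)^{-1}$ exists for every $\gamma > 0$: by the Gershgorin argument invoked just above, $\Asym$ is symmetric with only negative eigenvalues, so $\gamma I - \Asym$ is positive definite and invertible, and the right-hand side of the lemma is well defined.

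With these in hand the computation is short. Writing $B$ for the proposed inverse, I would expand
\[
(\gamma I - A)B = \tfrac{1}{\gamma}(\gamma I - A) + \tfrac{1}{\gamma}(\gamma I - A)R^T(\gamma I - \Asym)^{-1}RA.
\]
The first summand equals $I - \tfrac{1}{\gamma}A$. Into the second summand I would insert the identity $(\gamma I - A)R^T = R^T(\gamma I - \Asym)$; the factor $(\gamma I - \Asym)$ then cancels the resolvent and leaves $\tfrac{1}{\gamma}R^TRA = \tfrac{1}{\gamma}A$, again by $PA = A$. The two copies of $\tfrac{1}{\gamma}A$ cancel, the product collapses to $I$, and the lemma follows.

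The one genuine subtlety, and the step I would watch most carefully, is the side on which the projector acts. Here $PA = A$ holds but $AP = A$ does \emph{not} (the stored-pixel columns of $A$ are nonzero), so one may only replace $R^TRA$ by $A$ and must push $R^T$ through $A$ from the right via $AR^T = R^T\Asym$; performing the substitution on the wrong side destroys the cancellation. Everything else is routine bookkeeping. As a consistency check I would, if desired, also verify $B(\gamma I - A) = I$ by the mirror-image manipulation, using $A^2 = AR^T\,RA = R^T\Asym RA$ and $RR^T = I$ to reduce $RA(\gamma I - A)$ to $(\gamma I - \Asym)RA$; but for square matrices the one-sided identity already establishes that $\gamma I - A$ is invertible with the stated inverse.
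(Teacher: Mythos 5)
Your proof is correct and follows essentially the same route as the paper: both verify the identity by multiplying $(\gamma I - A)$ against the proposed inverse and collapsing the second summand via $(\gamma I - A)R^T = R^T(\gamma I - \Asym)$, which rests on $PA = A$ and $\Asym = RAR^T$. Your explicit remarks that a one-sided inverse suffices for square matrices, that $(\gamma I - \Asym)^{-1}$ exists by the Gershgorin argument, and that the projector may only be absorbed on the left of $A$ are careful additions but do not change the argument.
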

\begin{proof} We compute
\begin{align*}
(\gamma I-&A)
  \left(
   \tfrac{1}{\gamma}I +\tfrac{1}{\gamma}R^T(\gamma I-\Asym)^{-1}RA
  \right)
 =
 \tfrac{1}{\gamma}(\gamma I-A)+\tfrac{1}{\gamma}(\gamma I-A)R^T(\gamma I-\Asym)^{-1}RA \\
 &=
 I-\tfrac{1}{\gamma}A +\tfrac{1}{\gamma} P (\gamma I-A)R^T(\gamma I-\Asym)^{-1}RA
 =
 I-\tfrac{1}{\gamma}A + \tfrac{1}{\gamma} R^T(\gamma I-\Asym)(\gamma I-\Asym)^{-1}RA \\
 &=
 I-\tfrac{1}{\gamma}A+\tfrac{1}{\gamma}PA=I\,.
\end{align*}
\end{proof}

\section{Decoding by the extended Krylov subspace method} \label{sec:decoding}

Extended Krylov subspaces for invertible matrices use the matrix as well as its inverse (cf.~\cite{Druskin_Knizhnerman98,GG13,KnizhSimoncini09}). Let $C$ be an invertible matrix and $\vec{b}$ a vector of a suitable dimension. Then the extended Krylov subspace $\mathcal{E}_p^q(C,\vec{b})$ is defined as
\[
\mathcal{E}_p^q (C,\vec{b}) = \spn \left\{ \vec{b},C^{-1}\vec{b},\ldots,C^{-q}\vec{b},C\vec{b},\ldots,C^{p-1}\vec{b}\right\}\,.
\]
Due to Lemma~\ref{gImArep}, $\gamma I-A$ is invertible and hence we can set $C=(\gamma I-A)^{-1}$. We will use extended Krylov subspaces where $q$ is always one of the following form
\[
    \mathcal{E}_m((\gamma I-A)^{-1},\vec{b})
    =\mathcal{E}_{m-1}^1((\gamma I-A)^{-1},\vec{b})
    =\spn \{ \vec{b}, A\vec{b}, (\gamma I-A)^{-1}\vec{b}, \cdots, (\gamma I-A)^{-m+2}\vec{b}\}\,.
\]
Note, that we have used that $\spn \{\vec{b}, (\gamma I-A)\vec{b} \}=\spn \{\vec{b}, A\vec{b}\}$, here.
We start by computing an orthonormal basis $V_m \in \mathbb{R}^{n \times m}$ of the extended Krylov subspace by Algorithm~\ref{alg:arnlike}. Then, we compute the compression $S_m=V_m^TAV_m \in \mathbb{R}^{m \times m}$ of the large matrix~$A$, and finally the \emph{Krylov approximation} $\vec{f}_m$ as given in \eqref{kryapprox}.
In order to understand the properties of the Krylov algorithm, we study the algorithm via the symmetric matrix $\Asym$, the initial vector $\bsym$, and the alternative solution representation \eqref{altsolrep}.

\begin{algorithm}
        \caption{Symmetric Arnoldi-like algorithm}
\label{alg:arnlike}
        \begin{algorithmic}
                \State {Set $\vec{v}_1=\vec{b}\slash \|\vec{b}\|$, $\vec{v}_0=\vec{0}$}
\For {m=1,2,3,\ldots}
\If {$m=1$}
\State {$\vec{u}=A\vec{v}_1$}
\Else
\State{$\vec{u}=(\gamma I-A)^{-1}\vec{v}_m$}
\EndIf
\For {$j=m-1,m$}
\State {$h_{jm}=(\vec{u},\vec{v}_j)$}
\EndFor
\State {$\tilde{\vec{v}}_{m+1}=\vec{u}-h_{mm}\vec{v}_m-h_{m-1,m}\vec{v}_{m-1}$}
\State {$\vec{v}_{m+1}=\tilde{\vec{v}}_{m+1}\slash h_{m+1,m}$}, {$h_{m+1,m}=\|\tilde{\vec{v}}_{m+1}\|$}
\EndFor
\end{algorithmic}
\end{algorithm}
\begin{algorithm}[ht]
\caption{Symmetric Arnoldi} \label{alg:arn}
\begin{algorithmic}
\State {Set $\vec{w}_1=\bsym\slash \|\bsym\|$, $\vec{w}_0=\vec{0}$}
\For {m=1,2,3,\ldots}
\State{$\vec{u}=(\gamma I-\Asym)^{-1}\vec{w}_m$}
\For {$j=m-1,m$}
        \State {$\widehat{h}_{jm}=(\vec{u},\vec{w}_j)$}
\EndFor
        \State {$\tilde{\vec{w}}_{m+1}=\vec{u}-\widehat{h}_{mm}\vec{w}_m-\widehat{h}_{m-1,m}\vec{w}_{m-1}$}
        \State {$\vec{w}_{m+1}=\tilde{\vec{w}}_{m+1}\slash \widehat{h}_{m+1,m}$}, {$\widehat{h}_{m+1,m}=\|\tilde{\vec{w}}_{m+1}\|$}
\EndFor
\end{algorithmic}
\end{algorithm}
\noindent The computation of the basis $V_m$ by Algorithm~\ref{alg:arnlike} can be compared with the symmetric Arnoldi method for a related Krylov subspace as outlined in Lemma~\ref{lembasisVWext}.
\begin{lemma} \label{lembasisVWext}
Let $W_{m-1}=[\vec{w}_1,\cdots,\vec{w}_{m-1}]$ be the Arnoldi basis for the Krylov subspace $\mathcal{K}_{m-1}((\gamma I-\Asym)^{-1},\bsym)$ according to Algorithm~\ref{alg:arn}. Then, the orthonormal basis for the extended Krylov subspace
\[
  \mathcal{E}_m ((\gamma I-A)^{-1},\vec{b})
  = \spn \{ \vec{b}, A\vec{b}, (\gamma I-A)^{-1}\vec{b}, \cdots, (\gamma I-A)^{-m+2}\vec{b}\}
\]
according to Algorithm~\ref{alg:arnlike} reads
\[
        V_m=\left[ \vec{v}_1,\ldots, \vec{v}_m \right]=\left[ \vec{v}_1, R^TW_{m-1}\right]\,, \qquad \vec{v}_1=\frac{1}{\|\vec{b}\|}\vec{b}\,.
\]
\end{lemma}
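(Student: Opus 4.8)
The plan is to prove the identity column by column, by induction, establishing that $\vec{v}_1=\vec{b}/\|\vec{b}\|$ and $\vec{v}_{k+1}=R^T\vec{w}_k$ for every $k\ge 1$. The whole argument rests on three elementary facts that I would collect first. Since $R$ has orthonormal rows, $RR^T=I$, so $R^T$ is an isometry and $(R^T\vec{u},R^T\vec{v})=(\vec{u},\vec{v})$ for all reduced vectors; in particular the range of $R^T$ (vectors vanishing at the stored pixels) is orthogonal to $\vec{b}$, which is supported on the stored pixels. The second fact, which is the engine of the proof, is the intertwining relation
\[
(\gamma I-A)^{-1}R^T\vec{w}=R^T(\gamma I-\Asym)^{-1}\vec{w},
\]
obtained from lemma~\ref{gImArep} by inserting $RAR^T=\Asym$ and simplifying $(\gamma I-\Asym)^{-1}\Asym=-I+\gamma(\gamma I-\Asym)^{-1}$. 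It says that, restricted to the range of $R^T$, the resolvent of $A$ is carried by $R^T$ onto the resolvent of the small symmetric matrix $\Asym$.

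For the base case I would treat step $m=1$ of algorithm~\ref{alg:arnlike}. There $\tilde{\vec{v}}=A\vec{v}_1=A\vec{b}/\|\vec{b}\|$, and since $A$ has zero rows at the stored pixels we have $A\vec{b}=PA\vec{b}=R^T\bsym$, which lies in the range of $R^T$ and is therefore already orthogonal to $\vec{v}_1$; hence orthogonalisation does nothing and normalisation gives $\vec{v}_2=A\vec{b}/\|A\vec{b}\|=R^T\bsym/\|\bsym\|=R^T\vec{w}_1$, using that $R^T$ preserves norms. The transitional step $m=2$ is handled similarly: lemma~\ref{gImArep} splits $(\gamma I-A)^{-1}\vec{b}$ into a multiple of $\vec{b}$ plus $\tfrac{1}{\gamma}R^T(\gamma I-\Asym)^{-1}\bsym$; the first summand is removed by the projection onto $\vec{v}_1$, the second is a positive multiple of $R^T(\gamma I-\Asym)^{-1}\vec{w}_1$, and subtracting its $\vec{v}_2=R^T\vec{w}_1$ component and normalising reproduces exactly $R^T$ applied to the $m=1$ step of algorithm~\ref{alg:arn}, i.e.\ $\vec{v}_3=R^T\vec{w}_2$.

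For the inductive step ($m=j+1\ge 3$) I assume $\vec{v}_{k+1}=R^T\vec{w}_k$ for $k\le j$. The input is $\vec{v}_{j+1}=R^T\vec{w}_j$, so the intertwining relation gives $\tilde{\vec{v}}=(\gamma I-A)^{-1}R^T\vec{w}_j=R^T(\gamma I-\Asym)^{-1}\vec{w}_j=R^T\tilde{\vec{w}}$, where $\tilde{\vec{w}}$ is precisely the vector formed at step $m=j$ of algorithm~\ref{alg:arn} before its orthogonalisation. It then remains to show that the long Gram--Schmidt loop over $\vec{v}_1,\ldots,\vec{v}_{j+1}$ produces the same direction as the two-term orthogonalisation against $\vec{w}_{j-1},\vec{w}_j$. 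The projection onto $\vec{v}_1$ vanishes because $R^T\tilde{\vec{w}}\perp\vec{b}$, and for $1\le i\le j$ the coefficient against $\vec{v}_{i+1}=R^T\vec{w}_i$ equals $(R^T\tilde{\vec{w}},R^T\vec{w}_i)=(\tilde{\vec{w}},\vec{w}_i)=((\gamma I-\Asym)^{-1}\vec{w}_j,\vec{w}_i)$ by the isometry property.

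This is exactly where I expect the real work to be, and it is resolved by the symmetry of $\Asym$. Writing $((\gamma I-\Asym)^{-1}\vec{w}_j,\vec{w}_i)=(\vec{w}_j,(\gamma I-\Asym)^{-1}\vec{w}_i)$ and using that the $\vec{w}_i$ form a Krylov basis, so that $(\gamma I-\Asym)^{-1}\vec{w}_i\in\spn\{\vec{w}_1,\ldots,\vec{w}_{i+1}\}$, the coefficient vanishes whenever $i+1<j$, that is for all $i\le j-2$. Hence only the projections against $\vec{v}_j=R^T\vec{w}_{j-1}$ and $\vec{v}_{j+1}=R^T\vec{w}_j$ survive, with the same coefficients as in algorithm~\ref{alg:arn}; the orthogonalised $\tilde{\vec{v}}$ therefore equals $R^T$ applied to the orthogonalised $\tilde{\vec{w}}$, and normalisation (which commutes with the isometry $R^T$) yields $\vec{v}_{j+2}=R^T\vec{w}_{j+1}$. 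This closes the induction and proves $V_m=[\vec{v}_1,R^TW_{m-1}]$. The one point I would be careful to state explicitly, rather than treat as automatic, is this three-term-recurrence collapse, since it is the very fact that makes the short algorithm~\ref{alg:arn} legitimate.
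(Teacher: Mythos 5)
Your proof is correct and follows essentially the same route as the paper's: a column-by-column induction matching algorithm~\ref{alg:arnlike} to algorithm~\ref{alg:arn}, with the base cases $m=1,2$ handled via $A\vec{b}=R^T\bsym$ and lemma~\ref{gImArep}, and the inductive step driven by the intertwining relation $(\gamma I-A)^{-1}R^T=R^T(\gamma I-\Asym)^{-1}$ together with the vanishing of the $\vec{v}_1$-component. The only organizational difference is that you justify the collapse of the full Gram--Schmidt loop to a three-term recurrence explicitly inside the induction, whereas the paper first proves the statement against the full Arnoldi recursion and then invokes the standard reduction to symmetric Lanczos at the end; both are sound.
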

\begin{proof}
        The idea of the following proof is to compare the Arnoldi-like algorithm with the Arnoldi algorithm (cf. Algorithm~6.1 in Saad's book \cite{Saaditer}) for the standard Krylov space $\mathcal{K}_{m-1}((\gamma I-\Asym)^{-1},\bsym)$. The Arnoldi-like algorithm and the Arnoldi algorithm are the same as Algorithm~\ref{alg:arnlike} and Algorithm~\ref{alg:arn}, respectively, with the difference that the for-loop $j$ runs from $j=0,\ldots,m$.
In the Arnoldi-like algorithm, we compute $h_{jm}$ for $j=1,\ldots,m$, which guarantees that the generated vectors are perpendicular to each other.
The Arnoldi-like algorithm leads to the following. Let $\vec{v}_1=\frac{1}{\|\vec{b}\|}\vec{b}$, which is the obvious start. Then
\begin{align*}
  A\vec{v}_1 &= \frac{1}{\|\vec{b}\|} A\vec{b}  = \frac{1}{\|\vec{b}\|}R^T\bsym, \\
        h_{11} &= \vec{v}_1^T A\vec{v}_1 =\frac{1}{\|\vec{b}\|^2} \vec{b}^TA\vec{b}=\frac{1}{\|\vec{b}\|^2} \vec{b}^TPA\vec{b}=0,\\
  \tilde{\vec{v}}_2 &= A\vec{v}_1-h_{1,1}\vec{v}_1 = \frac{1}{\|\vec{b}\|} R^T \bsym, \\
        h_{21} &=\|\tilde{\vec{v}}_2\|=\frac{\|\bsym\|}{\|\vec{b}\|}, \\
        \vec{v}_2 &= \frac{1}{h_{21}}\tilde{\vec{v}}_2=R^T \frac{1}{\|\bsym\|} \bsym = R^T\vec{w}_1\,.
\end{align*}
Our statement is proved for $m=2$. For $m\geq 3$, we have
\begin{align*}
 h_{1m} &= \vec{v}_1^T(\gamma I-A)^{-1}\vec{v}_m
         = \vec{v}_1^T
           \left(
             \tfrac{1}{\gamma}I+\tfrac{1}{\gamma}R^T(\gamma I-\Asym)^{-1}RA
           \right)\vec{v}_m
         = \tfrac{1}{\gamma}\vec{v}_1^T\vec{v}_m
           + \tfrac{1}{\gamma}(R\vec{v}_1)^T(\gamma I-\Asym)^{-1}RA\vec{v}_m=0\,,
\end{align*}
since $\vec{v}_1^T\vec{v}_m=0$ and $R\vec{v}_1=\vec{0}$.
For $j=2,\ldots,m$, we obtain
\begin{align*}
 h_{jm} &= \vec{v}_j^T(\gamma I-A)^{-1}\vec{v}_m
         = \vec{v}_j^T
           \left(
             \tfrac{1}{\gamma}I+\tfrac{1}{\gamma}R^T(\gamma I-\Asym)^{-1}RA
           \right)\vec{v}_m
         = \tfrac{1}{\gamma}\vec{v}_j^T\vec{v}_m
           + \tfrac{1}{\gamma} (R\vec{v}_j)^T(\gamma I-\Asym)^{-1}RA\vec{v}_m\\
         &= \tfrac{1}{\gamma}(R^T\vec{w}_{j-1})^TR^T\vec{w}_{m-1}
           + \tfrac{1}{\gamma} (RR^T\vec{w}_{j-1})^T(\gamma I-\Asym)^{-1}RAR^T\vec{w}_{m-1}\\
	 &= \tfrac{1}{\gamma}\vec{w}_{j-1}^T\vec{w}_{m-1}
           + \tfrac{1}{\gamma} \vec{w}_{j-1}^T(\gamma I-\Asym)^{-1}\Asym\vec{w}_{m-1} 
\end{align*}
With the help of the relation
\begin{equation} \label{resrel}
        (\gamma I-\Asym)^{-1}\Asym = -I + \gamma(\gamma I-\Asym)^{-1}\,,
\end{equation}
we conclude
\begin{align*}
 h_{jm} &= \tfrac{1}{\gamma}\vec{w}_{j-1}^T\vec{w}_{m-1}
           + \tfrac{1}{\gamma} \vec{w}_{j-1}^T
             \left(
                -I+\gamma(\gamma I-\Asym)^{-1}
             \right) \vec{w}_{m-1}
         = \vec{w}_{j-1}^T(\gamma I-\Asym)^{-1} \vec{w}_{m-1}
         = \widehat{h}_{m-1,j-1}, \quad j=2,\ldots,m.
\end{align*}
By \eqref{resrel}, we also have
\begin{align*}
(\gamma I-A)^{-1}\vec{v}_m
  &= \left(
       \tfrac{1}{\gamma}I+\tfrac{1}{\gamma}R^T(\gamma I-\Asym)^{-1}RA
     \right) \vec{v}_m
   = R^T\left(
        \tfrac{1}{\gamma}I+\tfrac{1}{\gamma}(\gamma I-\Asym)^{-1}\Asym
     \right)\vec{w}_{m-1}
        = R^T(\gamma I-\Asym)^{-1}\vec{w}_{m-1}\,.
\end{align*}
Finally, we find
\begin{align*}
   \tilde{\vec{v}}_{m+1} &= (\gamma I-A)^{-1}\vec{v}_m -
        \sum_{j=2}^m h_{jm}\vec{v}_j - h_{1m}\vec{v}_1
   = R^T(\gamma I-\Asym)^{-1}\vec{w}_{m-1} -\sum_{j=2}^m \widehat{h}_{j-1,m-1}R^T \vec{w}_{j-1} \\
     &= R^T \left( (\gamma I-\Asym)^{-1}\vec{w}_{m-1} -
         \sum_{j=1}^{m-1} \widehat{h}_{j,m-1}\vec{w}_j \right)
   = R^T \tilde{\vec{w}}_m
\end{align*}
as well as $h_{m+1,m}=\|\tilde{\vec{v}}_{m+1}\|=\|\tilde{\vec{w}}_m\|=\widehat{h}_{m,m-1}$ and
\[
        \vec{v}_{m+1}=\frac{\tilde{\vec{v}}_{m+1}}{h_{m+1,m}}=R^T \frac{\tilde{\vec{w}}_m}{\widehat{h}_{m,m-1}}=R^T\vec{w}_m\,.
\]
We have now proved our statement for the Arnoldi algorithm. Since the matrix $(\gamma I-\Asym)^{-1}$ is symmetric, the Arnoldi algorithm automatically reduces to the symmetric Arnoldi algorithm
(cf. Section~6.6 of Saad's book~\cite{Saaditer}).
Hence, $\widehat{h}_{j-1,m-1}=0$ for $j-1\leq m-3$, and we obtain $h_{jm}=0$ for $j \leq m-2$.
\end{proof}
The first $m$ steps of Algorithm~\ref{alg:arnlike} and the first $m-1$ steps of Algorithm~\ref{alg:arn} can be written compactly as
\begin{equation}\label{arn:rel}
 \left[
   A\vec{v}_1, (\gamma I-A)^{-1}\left[ \vec{v}_2, \cdots, \vec{v}_m\right]
 \right]
 =
 \left[
   \vec{v}_1,\cdots, \vec{v}_{m+1}
        \right] \underline{H_m} \qquad \mbox{or} \qquad
 (\gamma I-\Asym) \left[ \vec{w}_1,\cdots, \vec{w}_{m-1} \right]
 = \left[ \vec{w}_1,\cdots,\vec{w}_m \right] \underline{\widehat{H}_{m-1}}\,,
\end{equation}
respectively,
where $\underline{H_m}$ is of dimension $(m+1) \times m$ and $\underline{\widehat{H}_{m-1}}$ is of dimension $m \times (m-1)$. Both are unreduced upper Hessenberg matrices that contain the values computed in the algorithms and are otherwise set to zero. Hence, more exactly, both matrices are tridiagonal due to Lemma~\ref{lembasisVWext}.
With $H_m$ and $\widehat{H}_{m-1}$, we designate the matrices with dimensions $m \times m$ and $(m-1)\times(m-1)$, respectively.
We then have
\[
 \underline{H_m}=
 \begin{bmatrix}
     0   &         0         \\
  \vec{u} &  \underline{\widehat{H}_{m-1}}  \\
 \end{bmatrix}, \quad \vec{u}=h_{21}\vec{e}_1 = \frac{\|\bsym\|}{\|\vec{b}\|}\vec{e}_1, \quad \mbox{and}\quad \widehat{H}_{m-1}^T=\widehat{H}_{m-1}\,.
\]
The following lemma relates the compression $S_m=V_m^TAV_m$ of $A$ in the extended Krylov subspace to the compression $\widetilde{S}_{m-1}=W_{m-1}^T\Asym W_{m-1}$ of $\Asym$ in the Krylov space $\mathcal{K}_{m-1}((\gamma I-\Asym)^{-1},\bsym)$.
\begin{lemma} \label{Smstruct}
Let $\widetilde{S}_{m-1}=W_{m-1}^T\Asym W_{m-1}$. Then, for $S_m=V_m^TAV_m$, one can find
\[
 S_m=
 \begin{bmatrix}
     0   &         0         \\
 \vec{u} &  \tilde{S}_{m-1}  \\
 \end{bmatrix}, \quad \vec{u}=\frac{1}{\|\vec{b}\|}W_{m-1}^T\bsym = \frac{\|\bsym\|}{\|\vec{b}\|}\vec{e}_1,
\]
with $\vec{e}_1 \in \mathbb{R}^{m-1}$,
and therefore, $\gamma I-S_m$ is invertible for all $\gamma >0$.
\end{lemma}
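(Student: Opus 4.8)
The plan is to substitute the basis factorisation $V_m = [\vec{v}_1, R^T W_{m-1}]$ from lemma~\ref{lembasisVWext} into $S_m = V_m^T A V_m$ and read off the blocks of the resulting partitioning. Writing $Q = R^T W_{m-1}$, this produces a $2 \times 2$ block matrix with entries $\vec{v}_1^T A \vec{v}_1$ and $\vec{v}_1^T A Q$ in the top row, and $Q^T A \vec{v}_1 = W_{m-1}^T R A \vec{v}_1$ together with $Q^T A Q = W_{m-1}^T R A R^T W_{m-1}$ in the bottom row. I would then evaluate each block using the structural identities already collected in section~\ref{sec:discretization}.

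First I would dispose of the whole top row at once. Since $\vec{v}_1 = \vec{b}/\|\vec{b}\|$ and $P = R^T R$ is a symmetric projector with $P\vec{b} = 0$ (equivalently $R\vec{b} = 0$), the identity $PA = A$ gives $\vec{v}_1^T A = \tfrac{1}{\|\vec{b}\|}\vec{b}^T P A = \tfrac{1}{\|\vec{b}\|}(P\vec{b})^T A = 0$, so both $\vec{v}_1^T A \vec{v}_1$ and $\vec{v}_1^T A Q$ vanish. For the bottom-right block I would use the definition $\Asym = RAR^T$ directly, which turns $W_{m-1}^T R A R^T W_{m-1}$ into $\widetilde{S}_{m-1}$. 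For the bottom-left block, $RA\vec{v}_1 = \tfrac{1}{\|\vec{b}\|}RA\vec{b} = \tfrac{1}{\|\vec{b}\|}\bsym$, hence $\vec{u} = \tfrac{1}{\|\vec{b}\|}W_{m-1}^T\bsym$; since $\vec{w}_1 = \bsym/\|\bsym\|$ and the Arnoldi vectors $\vec{w}_1,\ldots,\vec{w}_{m-1}$ are orthonormal, each component equals $\vec{w}_j^T\bsym = \|\bsym\|\,\delta_{1j}$, which collapses $\vec{u}$ to $\tfrac{\|\bsym\|}{\|\vec{b}\|}\vec{e}_1$.

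For the invertibility claim I would observe that $\gamma I - S_m$ is block lower triangular, with scalar $(1,1)$ entry $\gamma$ and lower-right block $\gamma I - \widetilde{S}_{m-1}$, so that $\det(\gamma I - S_m) = \gamma\,\det(\gamma I - \widetilde{S}_{m-1})$. The first factor is nonzero for $\gamma > 0$. For the second factor I would invoke that $\Asym$ is symmetric with only negative eigenvalues, established before the lemma via the Gershgorin disk theorem, so its compression $\widetilde{S}_{m-1} = W_{m-1}^T \Asym W_{m-1}$ is again negative definite: $W_{m-1}$ has orthonormal, hence linearly independent, columns, whence $\vec{x}^T \widetilde{S}_{m-1}\vec{x} = (W_{m-1}\vec{x})^T \Asym (W_{m-1}\vec{x}) < 0$ for $\vec{x} \neq 0$. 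Thus $\gamma I - \widetilde{S}_{m-1}$ is positive definite and invertible for every $\gamma > 0$, and the product of determinants is nonzero.

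The computation is essentially bookkeeping once lemma~\ref{lembasisVWext} is in hand, and I expect no serious obstacle. The one point needing care is recognising that the \emph{entire} top row of $S_m$ vanishes, not merely the diagonal entry; this rests on the stronger fact $\vec{v}_1^T A = 0$, which is the consequence $A^T\vec{b} = 0$ of $PA = A$ together with $P\vec{b} = 0$. After that, matching the remaining blocks to $\widetilde{S}_{m-1}$ and $\vec{e}_1$ only requires the definitions $\Asym = RAR^T$ and $\bsym = RA\vec{b}$ and the normalisation of the first Arnoldi vector.
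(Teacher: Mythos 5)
Your proposal is correct and follows essentially the same route as the paper: substitute the factorisation $V_m=[\vec{v}_1,R^TW_{m-1}]$ from lemma~\ref{lembasisVWext}, identify the four blocks using $PA=A$, $\Asym=RAR^T$, $\bsym=RA\vec{b}$ and the normalisation $\vec{w}_1=\bsym/\|\bsym\|$, and conclude invertibility from the block lower-triangular structure together with the negative definiteness of $\tilde{S}_{m-1}$. The only cosmetic differences are that you justify the vanishing top row via $A^T\vec{b}=(PA)^T\vec{b}=A^TP\vec{b}=0$ rather than the paper's sparsity remark, and you prove the negative definiteness of $\tilde{S}_{m-1}$ inline with the quadratic-form argument that the paper delegates to lemma~\ref{lem:Stildesym}.
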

\begin{proof}
By Lemma~\ref{lembasisVWext}, one obtains
\renewcommand{\arraystretch}{1.5}
\begin{align*}
S_m&=V_m^TAV_m
=
\begin{bmatrix}
 \frac{1}{\|\vec{b}\|}\vec{b} & R^TW_{m-1}
\end{bmatrix}^TA
\begin{bmatrix}
 \frac{1}{\|\vec{b}\|}\vec{b} & R^TW_{m-1}
\end{bmatrix}
=
\begin{bmatrix}
\frac{1}{\|\vec{b}\|}\vec{b}^T \\
  W_{m-1}^TR
\end{bmatrix}
\begin{bmatrix}
\frac{1}{\|\vec{b}\|}A\vec{b} & AR^TW_{m-1}
\end{bmatrix}\\
&=
\begin{bmatrix}
\frac{1}{\|\vec{b}\|^2}\vec{b}^TA\vec{b} & \frac{1}{\|\vec{b}\|}\vec{b}^TAR^TW_{m-1} \\
\frac{1}{\|\vec{b}\|}W_{m-1}^TRA\vec{b} & W_{m-1}^TRAR^TW_{m-1}
\end{bmatrix}\,.
\end{align*}
Since $A\vec{v}$ has zeros where $\vec{b}$ has entries and vice versa,
$\vec{b}^TA\vec{b}=0$ and $\frac{1}{\|\vec{b}\|}\vec{b}^TAR^TW_{m-1}=\vec{0}$ is a zero row vector of length $m-1$. Furthermore,
\[
\vec{u} =\frac{1}{\|\vec{b}\|}W_{m-1}^TRA\vec{b}
        =\frac{1}{\|\vec{b}\|}W_{m-1}^T\bsym
        =\frac{\|\bsym\|}{\|\vec{b}\|}W_{m-1}^T\vec{w}_1
        =\frac{\|\bsym\|}{\|\vec{b}\|}\vec{e}_1
\]
and
\[
\tilde{S}_{m-1}=W_{m-1}^TRAR^TW_{m-1}=W_{m-1}^T\Asym W_{m-1}\,.
\]
\noindent Hence
\renewcommand{\arraystretch}{1}
\[
 \gamma I-S_m =
 \begin{bmatrix}
    \gamma & 0 \\
   -\vec{u} & \gamma I-\tilde{S}_{m-1}
 \end{bmatrix}
\]
is invertible, since $\gamma >0$ and $\tilde{S}_{m-1}$ is a symmetric, negative-definite matrix, by Lemma~\ref{lem:Stildesym}, and $\gamma I-\tilde{S}_{m-1}$ therefore is a symmetric positive-definite matrix.
\end{proof}
\begin{lemma} \label{lem:Stildesym}
  $\tilde{S}_{m-1}$ is a symmetric and negative-definite matrix.
\end{lemma}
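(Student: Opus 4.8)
The plan is to reduce both assertions to the spectral properties of $\Asym$ that have already been established. Recall from the discussion preceding lemma~\ref{exactsolrep} that $\Asym$ is symmetric (read off directly from the stencil) and, by the Gershgorin disk argument, has only negative eigenvalues; hence $\Asym$ is a symmetric negative-definite matrix. Recall also that the columns of $W_{m-1}$ are orthonormal by the Gram--Schmidt normalisation in algorithm~\ref{alg:arn}, so that $W_{m-1}^T W_{m-1}=I$ and $W_{m-1}$ has full column rank. These are the only ingredients I expect to need.

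For symmetry I would simply transpose, using $\Asym^T=\Asym$:
\[
\tilde{S}_{m-1}^T=\left(W_{m-1}^T\Asym W_{m-1}\right)^T=W_{m-1}^T\Asym^T W_{m-1}=W_{m-1}^T\Asym W_{m-1}=\tilde{S}_{m-1}.
\]
For negative-definiteness I would argue via the congruence $\vec{x}\mapsto W_{m-1}\vec{x}$. Fix an arbitrary nonzero $\vec{x}\in\mathbb{R}^{m-1}$ and set $\vec{z}=W_{m-1}\vec{x}$. Because the columns of $W_{m-1}$ are orthonormal, $\|\vec{z}\|^2=\vec{x}^TW_{m-1}^TW_{m-1}\vec{x}=\|\vec{x}\|^2>0$, so $\vec{z}\neq\vec{0}$. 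Then
\[
\vec{x}^T\tilde{S}_{m-1}\vec{x}=\vec{x}^TW_{m-1}^T\Asym W_{m-1}\vec{x}=\vec{z}^T\Asym\vec{z}<0,
\]
the strict inequality holding precisely because $\Asym$ is negative-definite. Since $\vec{x}$ was an arbitrary nonzero vector, this shows $\tilde{S}_{m-1}$ is negative-definite.

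There is no genuine obstacle here: the result is the standard fact that a congruence by a matrix of full column rank preserves definiteness of a symmetric form. The only point that must not be glossed over is that $W_{m-1}$ really has orthonormal columns, so that $\vec{x}\mapsto W_{m-1}\vec{x}$ is injective and the quadratic form cannot vanish for nonzero $\vec{x}$; this is guaranteed by the normalisation step in algorithm~\ref{alg:arn}. I would also note in passing that the negative-definiteness of $\Asym$ relies on the standing assumption (made in the remark after lemma~\ref{exactsolrep}) that at least one stored pixel lies in the interior of the domain, which is harmless in the image-reconstruction setting.
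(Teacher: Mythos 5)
Your proof is correct and is essentially the argument the paper gives: the paper phrases the negative-definiteness step in terms of the field of values, showing $W(\tilde{S}_{m-1})\subset W(\Asym)\subset(-\infty,0)$, but the underlying computation is exactly your congruence identity $\vec{x}^T\tilde{S}_{m-1}\vec{x}=(W_{m-1}\vec{x})^T\Asym(W_{m-1}\vec{x})$ together with the orthonormality of the columns of $W_{m-1}$, and the symmetry argument is identical. No substantive difference.
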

\begin{proof}
The symmetry follows directly from the symmetry of $\Asym$:
\[
 \tilde{S}_{m-1}^T
   = \left(W_{m-1}^T\Asym W_{m-1}\right)^T=W_{m-1}^T\Asym^T W_{m-1}
   = W_{m-1}^T\Asym W_{m-1} = \tilde{S}_{m-1}\,.
\]
Since $\Asym$ is symmetric and has only negative eigenvalues, the field-of-values of $\Asym$ is given as
\[
W(\Asym)=[\lambda_{\mbox{\scriptsize min}},\lambda_{\mbox{\scriptsize max}}] \subset (-\infty,0),
\]
where
\[
\lambda_{\mbox{\scriptsize min}}=\min_{\lambda \in \sigma(\Asym)} \lambda, \qquad
\lambda_{\mbox{\scriptsize max}}=\max_{\lambda \in \sigma(\Asym)} \lambda\,,
\]
and $\sigma(\Asym)$ is the set of eigenvalues. Hence, for the field-of-values of $\tilde{S}_{m-1}$,
\[
 W(\tilde{S}_{m-1})
   = \left\{ \frac{\vec{y}^H \tilde{S}_{m-1} \vec{y}}{\vec{y}^H\vec{y}} ~\big|~ 0 \neq \vec{y} \in \mathbb{C}^{n-1} \right\}
   \subset W(\Asym) = [\lambda_{\mbox{\scriptsize min}},\lambda_{\mbox{\scriptsize max}}]\,,
\]
since
\[
\frac{\vec{y}^HW_{m-1}^H\Asym W_{m-1} \vec{y}}{\vec{y}^H\vec{y}}
 = \frac{\vec{y}^HW_{m-1}^H\Asym W_{m-1} \vec{y}}{\vec{y}^HW_{m-1}^HW_{m-1}\vec{y}}
 = \frac{\vec{x}^H\Asym \vec{x}}{\vec{x}^H\vec{x}}\,, \qquad \vec{x}=W_{m-1}\vec{y}\,.
\]
Hence, since $ \sigma(\tilde{S}_{m-1}) \subset W(\tilde{S}_{m-1})$, all eigenvalues of $\tilde{S}_{m-1}$ are negative and
\[
        \frac{\vec{y}^H \tilde{S}_{m-1}\vec{y}}{\|\vec{y}\|^2} < 0\, \qquad \mbox{for all} \quad \vec{y}\neq \vec{0}\,,
\]
and therefore
\[
        \vec{y}^H \tilde{S}_{m-1}\vec{y} < 0, \qquad \mbox{for all} \quad \vec{y} \neq \vec{0}\,,
\]
which means that $\tilde{S}_{m-1}$ is negative definite.
\end{proof}

\begin{lemma} Let $S_m=V_m^TAV_m$ be the compression of $A$. With $E=[ \vec{0}, I_{m-1} ] \in \mathbb{R}^{(m-1)\times m}$, let $\widehat{H}_{m-1}=EH_mE^T$. Then $\widehat{H}_{m-1}$ is invertible and one may compute the compression $S_m$ by the values computed in Algorithm~\ref{alg:arnlike} as follows
\[
 S_m=
 \begin{bmatrix}
     0   &         0         \\
 \vec{u} &  \tilde{S}_{m-1}  \\
 \end{bmatrix}, ~ \vec{u}=\frac{\|\bsym\|}{\|\vec{b}\|}\vec{e}_1 \in \mathbb{R}^{m-1}, ~ \mbox{and} ~\,
        \tilde{S}_{m-1}=\gamma I-\widehat{H}_{m-1}^{-1}+h_{m+1,m}^2(\vec{v}_{m+1}^TA\vec{v}_{m+1}-\gamma)\widehat{H}_{m-1}^{-1}\vec{e}_{m-1}\vec{e}_{m-1}^T\widehat{H}_{m-1}^{-1}\,.
\]
\end{lemma}
\begin{proof}
According to Lemma~\ref{Smstruct}, the structure is as stated in this lemma where
\[
  \tilde{S}_{m-1}=W_{m-1}^T\Asym W_{m-1}.
\]
By formula (5.8) in Grimm~\cite{ratkryphi11}, we obtain that $\widehat{H}_{m-1}$ in the relation on the right-hand side of  \eqref{arn:rel} is invertible as well as
\[
\tilde{S}_{m-1}=W_{m-1}^T\Asym W_{m-1}=\gamma I-\widehat{H}_{m-1}^{-1}+\widehat{h}_{m,m-1}^2(\vec{w}_m^T \Asym \vec{w}_m -\gamma) \widehat{H}_{m-1}^{-1}\vec{e}_{m-1}\vec{e}_{m-1}^T\widehat{H}_{m-1}^{-1}.
\]
        As outlined in the discussion following formula \eqref{arn:rel}, the values $H_m$ computed in Algorithm~\ref{alg:arnlike} relate to the ones computed by Algorithm~\ref{alg:arn}, collected in $\widehat{H}_{m-1}$, as $\hat{H}_{m-1}=EH_mE^T$. Therefore, $\widehat{h}_{m,m-1}=h_{m+1,m}$ and $\vec{v}_{m+1}^TA\vec{v}_{m+1}=\vec{w}_m^T RAR^T\vec{w}_m=\vec{w}_m^T \Asym\vec{w}_m$ conclude the proof of our lemma.
\end{proof}
The following theorem states that the boundary pixels are correctly set in the first Krylov step and not altered afterwards due to the properties of the matrix $R^T$.
An alternative representation of the Krylov approximation is given with the help of the $\varphi_1$-function analogous to Lemma~\ref{exactsolrep}.
\begin{theorem} \label{Kryapproxext}
        The Krylov approximation to the matrix exponential times vector, $e^{tA}\vec{b}$, in the extended Krylov subspace $\mathcal{E}_m((\gamma I-A)^{-1},\vec{b})$ reads
\[
\|\vec{b}\| V_m e^{tS_m}\vec{e}_1 = \vec{b} + R^T\left( \|\bsym\| W_{m-1} t \varphi_1(t \tilde{S}_{m-1})\vec{e}_1 \right)\,.
\]
\end{theorem}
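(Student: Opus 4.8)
The plan is to exploit the block lower-triangular shape of $S_m$ established in lemma~\ref{Smstruct} in order to reduce $e^{tS_m}\vec{e}_1$ to a $\varphi_1$-expression in the small symmetric block $\tilde{S}_{m-1}$, and then to substitute the basis identity $V_m=[\vec{v}_1,R^TW_{m-1}]$ from lemma~\ref{lembasisVWext}. The computation is the exact analogue, now carried out on the compressed $m\times m$ quantities, of the proof of lemma~\ref{exactsolrep}, which is precisely the parallel the theorem is meant to display.

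First I would record the elementary power-series identity
\[
e^{tS_m}\vec{e}_1=\vec{e}_1+t\varphi_1(tS_m)\,S_m\vec{e}_1\,,
\]
which follows from $e^z=1+z\varphi_1(z)$. Writing $S_m=\begin{bmatrix}0&0\\\vec{u}&\tilde{S}_{m-1}\end{bmatrix}$ as in lemma~\ref{Smstruct}, the first column of $S_m$ is $(0,\vec{u}^T)^T$, and since the first row of $S_m$ vanishes, a short induction gives
\[
S_m^k\vec{e}_1=\begin{pmatrix}0\\ \tilde{S}_{m-1}^{\,k-1}\vec{u}\end{pmatrix}\,,\qquad k\ge 1\,,
\]
the top component being killed at every step while the bottom component is propagated by $\tilde{S}_{m-1}$.

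Next I would resum the exponential series. Separating the $k=0$ term and shifting the summation index by $j=k-1$, the bottom block collapses to $t\sum_{j\ge 0}\tfrac{t^{\,j}}{(j+1)!}\tilde{S}_{m-1}^{\,j}\vec{u}=t\varphi_1(t\tilde{S}_{m-1})\vec{u}$, so that
\[
e^{tS_m}\vec{e}_1=\begin{pmatrix}1\\ t\varphi_1(t\tilde{S}_{m-1})\vec{u}\end{pmatrix}\,.
\]
Inserting $\vec{u}=\tfrac{\|\bsym\|}{\|\vec{b}\|}\vec{e}_1$ (again from lemma~\ref{Smstruct}, with $\vec{e}_1\in\mathbb{R}^{m-1}$) and multiplying on the left by $\|\vec{b}\|V_m=\|\vec{b}\|[\vec{v}_1,R^TW_{m-1}]$ splits the result into two pieces: the first column contributes $\|\vec{b}\|\vec{v}_1=\vec{b}$, while the remaining block contributes $\|\vec{b}\|\,R^TW_{m-1}\,t\varphi_1(t\tilde{S}_{m-1})\vec{u}=R^T\!\left(\|\bsym\|W_{m-1}\,t\varphi_1(t\tilde{S}_{m-1})\vec{e}_1\right)$, which is exactly the claimed identity.

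I do not expect a genuine obstacle, as the statement is a transcription of lemma~\ref{exactsolrep} to the reduced problem and rests entirely on the structural lemmas already proved. The points that need care are the bookkeeping of the two different unit vectors—$\vec{e}_1\in\mathbb{R}^m$ on the left-hand side versus $\vec{e}_1\in\mathbb{R}^{m-1}$ inside $\vec{u}$ and in the final formula—and the observation that $\varphi_1(t\tilde{S}_{m-1})$ is well defined because $\tilde{S}_{m-1}$ is symmetric negative-definite by lemma~\ref{lem:Stildesym}. Alternatively, the explicit power series can be bypassed: introducing the selection matrix $\hat{R}=[\,\vec{0}\,,I_{m-1}]$ one has $\hat{R}^T\hat{R}S_m=S_m$, $\hat{R}S_m\hat{R}^T=\tilde{S}_{m-1}$ and $\hat{R}S_m\vec{e}_1=\vec{u}$, so that applying the commutation identity $\varphi_1(\hat{R}^T\hat{R}M)\hat{R}^T=\hat{R}^T\varphi_1(\hat{R}M\hat{R}^T)$ (corollary~1.34 in \cite{highambook}) verbatim as in lemma~\ref{exactsolrep} yields $e^{tS_m}\vec{e}_1=\vec{e}_1+\hat{R}^T\!\left(t\varphi_1(t\tilde{S}_{m-1})\vec{u}\right)$, making the parallel to the continuous statement fully transparent.
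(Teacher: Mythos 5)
Your proof is correct and follows essentially the same route as the paper's: exploit the block lower-triangular form of $S_m$ from lemma~\ref{Smstruct}, resum the exponential series into a $\varphi_1$-expression in $\tilde{S}_{m-1}$, substitute $\vec{u}=\tfrac{\|\bsym\|}{\|\vec{b}\|}\vec{e}_1$, and multiply by $\|\vec{b}\|V_m=\|\vec{b}\|[\vec{v}_1,R^TW_{m-1}]$. The only differences are cosmetic (you apply the series directly to $\vec{e}_1$ rather than first forming $e^{tS_m}$, and the sketched alternative via corollary~1.34 of \cite{highambook} is a nice but inessential addition).
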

\begin{proof}
With the help of Lemma~\ref{Smstruct}, one obtains
\[
 e^{tS_m} = \sum_{k=0}^\infty \frac{t^k}{k!} S_m^k
 =
 \sum_{k=0}^\infty \frac{t^k}{k!}
 \begin{pmatrix}
     0    & 0 \\
  \vec{u} & \tilde{S}_{m-1}
 \end{pmatrix}^k
 =
 I + \sum_{k=1}^\infty \frac{t^k}{k!}
 \begin{pmatrix}
   0 & 0 \\
  \tilde{S}_{m-1}^{k-1} \vec{u} & \tilde{S}_{m-1}^k
 \end{pmatrix}\,,
\]
and hence
\begin{align*}
 e^{tS_m}\vec{e}_1 &= \vec{e}_1
 +
 \sum_{k=1}^\infty \frac{t^k}{k!}
 \begin{pmatrix}
   0 & 0 \\
   \tilde{S}_{m-1}^{k-1} \vec{u} & \tilde{S}_{m-1}^k
 \end{pmatrix} \vec{e}_1
 =
 \vec{e}_1 +
 \begin{pmatrix}
   0 \\
   \sum_{k=1}^\infty \frac{t^k}{k!} \tilde{S}_{m-1}^{k-1}\vec{u}
 \end{pmatrix}
 =
 \begin{pmatrix}
   1 \\
  t\varphi_1(t\tilde{S}_{m-1})\vec{u}
 \end{pmatrix}
 =
\begin{pmatrix}
   1 \\
   \frac{\|\bsym\|}{\|\vec{b}\|} t\varphi_1(t\tilde{S}_{m-1})\vec{e}_1
 \end{pmatrix}
\end{align*}
Finally, one obtains
\begin{align*}
\|\vec{b}\|V_m e^{tS_m}\vec{e}_1
 &= \|\vec{b}\|
 \begin{bmatrix}
 \vec{v}_1 & R^TW_{m-1}
 \end{bmatrix}
 \begin{bmatrix}
   1 \\
  \frac{\|\bsym\|}{\|\vec{b}\|} t\varphi_1(t\tilde{S}_{m-1})\vec{e}_1
 \end{bmatrix}
 =
\|\vec{b}\|\vec{v}_1 + \|\vec{b}\|R^TW_{m-1} \left( \frac{\|\bsym\|}{\|\vec{b}\|} t\varphi_1(t\tilde{S}_{m-1})\vec{e}_1 \right) \\
 &=
 \vec{b} + R^T \left( \|\bsym\| W_{m-1} t\varphi_1(t\tilde{S}_{m-1})\vec{e}_1\right).
\end{align*}
\end{proof}
The comparison of the exact solution with the Krylov approximation via the alternative representation by the $\varphi_1$-function leads to the error bound in Theorem~\ref{error}. The $\varphi_1$-function can be approximated uniformly for matrices/operators with field-of-values in the left half-plane (cf. \cite{ratkryphi11}). Here we obtain even better bounds due to the symmetry of the matrix at which the function is evaluated.
\begin{theorem}\label{error}
        The error of the rational Krylov approximation $\vec{f}_m=\|\vec{b}\|V_me^{tS_m}\vec{e}_1$ in the extended Krylov subspace $\mathcal{E}_{m-1}((\gamma I-tA)^{-1},\vec{b})$ with $\gamma >0$ to the solution of \eqref{ode} reads, for $m \geq 2$ and $t > 0$,
\[
  \|e^{tA}\vec{b}-\vec{f}_m\| \leq 2tE_m(\gamma)\|\bsym\|,
\]
with
\[
  E_m(\gamma)=\min_{r \in \mathcal{R}_{m-1}} \|\varphi_1-r\|_{(-\infty,0]}\,,
\]
where $\|\cdot \|_{(-\infty,0]}$ designates the supremum norm on $(-\infty,0]$ and the space
\[
 \mathcal{R}_{m-1}=
   \left\{
           \frac{p_{m-2}}{(\gamma- \cdot )^{m-2}} ~|~ p_{m-2} \in \mathcal{P}_{m-2}
   \right\}
\]
is the space of rational functions of the indicated form and dimension $m-1$. Here, $\mathcal{P}_{m-2}$ is the space of polynomials with degree less than or equal to $m-2$.
\end{theorem}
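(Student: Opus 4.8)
The plan is to reduce the error of the full, non-symmetric Krylov approximation to the error of a \emph{standard} symmetric rational Krylov approximation of $\varphi_1$ evaluated at $\Asym$, and then to control that by the best uniform rational approximation of $\varphi_1$ on $(-\infty,0]$. The two alternative representations already prove most of the structural work: Lemma~\ref{exactsolrep} gives $e^{tA}\vec{b}=\vec{b}+R^T\!\left(t\varphi_1(t\Asym)\bsym\right)$ and Theorem~\ref{Kryapproxext} gives $\vec{f}_m=\vec{b}+R^T\!\left(\|\bsym\|\,W_{m-1}\,t\varphi_1(t\tilde{S}_{m-1})\vec{e}_1\right)$. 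Writing $\bsym=\|\bsym\|\vec{w}_1=\|\bsym\|W_{m-1}\vec{e}_1$ and cancelling the shared term $\vec{b}$, the error collapses to
\[
e^{tA}\vec{b}-\vec{f}_m = t\|\bsym\|\, R^T\!\left(\varphi_1(t\Asym)\vec{w}_1 - W_{m-1}\varphi_1(t\tilde{S}_{m-1})\vec{e}_1\right).
\]
Because $R$ selects rows and $R^T$ pads with zeros we have $RR^T=I$, so $R^T$ is an isometry and may be discarded inside the norm; the columns of $W_{m-1}$ are likewise orthonormal. Hence it suffices to bound $\big\|\varphi_1(t\Asym)\vec{w}_1 - W_{m-1}\varphi_1(t\tilde{S}_{m-1})\vec{e}_1\big\|$, which is precisely the error of the rational Krylov approximation of $\varphi_1(t\Asym)\vec{w}_1$ in the space $\mathcal{K}_{m-1}((\gamma I-t\Asym)^{-1},\bsym)$ with compressed matrix $\tilde{S}_{m-1}=W_{m-1}^T\Asym W_{m-1}$.

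Next I would insert an arbitrary $r\in\mathcal{R}_{m-1}$ and exploit the exactness of the rational Krylov space. Since every element of $\mathcal{K}_{m-1}$ is of the form $\rho(\Asym)\bsym$ with $\rho\in\mathcal{R}_{m-1}$, and $\tilde{S}_{m-1}$ is the orthogonal compression of $\Asym$ onto that space, one has the exactness identity $r(t\Asym)\vec{w}_1 = W_{m-1}r(t\tilde{S}_{m-1})\vec{e}_1$ for every $r\in\mathcal{R}_{m-1}$. Adding and subtracting this term and using the triangle inequality yields
\[
\big\|\varphi_1(t\Asym)\vec{w}_1 - W_{m-1}\varphi_1(t\tilde{S}_{m-1})\vec{e}_1\big\| \le \big\|(\varphi_1-r)(t\Asym)\vec{w}_1\big\| + \big\|(\varphi_1-r)(t\tilde{S}_{m-1})\vec{e}_1\big\|.
\]
Both $t\Asym$ and $t\tilde{S}_{m-1}$ are symmetric with spectrum in $(-\infty,0)$ — the first because $\Asym$ is symmetric with negative eigenvalues, the second by Lemma~\ref{lem:Stildesym} together with $W(\tilde{S}_{m-1})\subset W(\Asym)$. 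For a symmetric matrix $M$ one has $\|g(M)\|=\max_{\lambda\in\sigma(M)}|g(\lambda)|\le\|g\|_{(-\infty,0]}$, so applying this with $g=\varphi_1-r$ to the unit vectors $\vec{w}_1$ and $\vec{e}_1$ bounds each summand by $\|\varphi_1-r\|_{(-\infty,0]}$, which supplies the factor $2$. Taking the infimum over $r\in\mathcal{R}_{m-1}$ produces $2E_m(\gamma)$, and reinstating the prefactor $t\|\bsym\|$ gives the claimed bound $\|e^{tA}\vec{b}-\vec{f}_m\|\le 2tE_m(\gamma)\|\bsym\|$.

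The main obstacle is the exactness identity $r(t\Asym)\vec{w}_1 = W_{m-1}r(t\tilde{S}_{m-1})\vec{e}_1$. One must verify that the rational Krylov space of dimension $m-1$ reproduces \emph{exactly} the functions of $\mathcal{R}_{m-1}$ (numerator degree $\le m-2$, denominator $(\gamma-\cdot)^{m-2}$) upon replacing $\Asym$ by its compression $\tilde{S}_{m-1}$, which is the standard moment-matching property of (rational) Krylov subspaces but must be checked against the precise degree and pole count here. A closely related care point is matching the pole location: the best-approximation space $\mathcal{R}_{m-1}$ and the function $\varphi_1$ are taken in the scaled variable, so the resolvent generating the space must be read as $(\gamma I-t\Asym)^{-1}$ (equivalently, $\gamma$ is rescaled by $t$ relative to the unscaled resolvent $(\gamma I-\Asym)^{-1}$ used in the Arnoldi recurrence). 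Once the exactness and the pole bookkeeping are settled, the remaining estimates are the routine symmetric spectral bound and the isometry of $R^T$ and $W_{m-1}$.
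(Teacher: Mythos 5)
Your proposal is correct and follows essentially the same route as the paper's own proof: both combine the representations from Lemma~\ref{exactsolrep} and Theorem~\ref{Kryapproxext} to reduce the error to the difference $\varphi_1(t\Asym)\bsym-\|\bsym\|W_{m-1}\varphi_1(t\tilde{S}_{m-1})\vec{e}_1$, insert an arbitrary $r\in\mathcal{R}_{m-1}$ via the rational-Krylov exactness property, and finish with the triangle inequality and the spectral bound for symmetric matrices with spectrum in $(-\infty,0)$. Your explicit remarks on the isometry of $R^T$ and on the pole rescaling $\gamma\mapsto\gamma/t$ are consistent with (and slightly more careful than) the paper's treatment, which relegates the scaling discussion to the remark following the theorem.
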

\begin{proof}
With Lemma~\ref{exactsolrep} and Theorem~\ref{Kryapproxext}, one obtains
\[
        e^{tA}\vec{b}-\vec{f}_m
        = tR^T(\varphi_1(t\Asym)\bsym-\|\bsym\|W_{m-1}\varphi_1(t\tilde{S}_{m-1})\vec{e}_1)\,.
\]
Hence, by the exactness property in the Krylov space $\mathcal{K}_{m-1}((\gamma I-t\Asym)^{-1}, \bsym)$, namely
\[
        r(t\Asym)\bsym = \|\bsym\|W_{m-1}r(t\tilde{S}_{m-1})\vec{e}_1
\]
for all $r \in \mathcal{R}_{m-1}$, one obtains
\begin{align*}
        \|e^{tA}\vec{b}-\vec{f}_m \|
        &\leq t\|(\varphi_1(t\Asym)-r(t\Asym))\bsym\|
        + t\|\bsym\|\|(r(t\tilde{S}_{m-1})-\varphi_1(t\tilde{S}_{m-1}))\vec{e}_1\| \\
        &\leq 2t\|\bsym\|\cdot \min_{r \in \mathcal{R}_{m-1}} \max_{z \in (-\infty,0]}
        |\varphi_1(z)-r(z)|\,.
\end{align*}
\end{proof}
Note that $E_m(\gamma)$ does neither depend on the chosen $t>0$ nor on the size of the matrix $A$. With exactly the same ideas as in van den Eshof and Hochbruck~\cite{marlis_jasper}, Table~\ref{tab:errphi} of optimal values of $\gamma$ with respect to the minimisation of the error can be numerically computed with the help of a simple transform and the Remez algorithm.
Note that there is a subtle issue about scaling. One has to use the Krylov subspace as given in the theorem. Alternatively, one might use the space $\mathcal{E}_{m}((\tilde{\gamma}I-A)^{-1},\vec{b})$ with $\tilde{\gamma}=\gamma_{\mbox{\scriptsize opt}} \slash t$ for the simple reason that
$\mathcal{E}_m((\gamma I-tA)^{-1},\vec{b})=\mathcal{E}_{m}((\frac{\gamma}{t}I-A)^{-1},\vec{b})$. We refer the reader to van den Eshof and Hochbruck~\cite{marlis_jasper} for details.
\begin{table}[tbhp]{
  \caption{Numerical approximation to the optimal value of $\gamma$, $\gamma_{\mbox{\scriptsize opt}}$, and the error $E_m(\gamma_{\mbox{\scriptsize opt}})$ for the optimal choice of $\gamma$, dimension $m$ of the extended Krylov subspace, and $\#lss$ linear system solves by the multigrid method.}\label{tab:errphi}.
\begin{center}
\begin{tabular}{c|c|cc||c|c|cc}
\hline
\rule{0pt}{3ex}
 $\#lss$ & $m$ & $E_m(\gamma_{\mbox{\scriptsize opt}})$ & $\gamma_{\mbox{\scriptsize opt}}$ &
 $\#lss$ & $m$ & $E_m(\gamma_{\mbox{\scriptsize opt}})$ & $\gamma_{\mbox{\scriptsize opt}}$ \\[0.5ex]
\hline
 $1$ & $3$  & $2.6\cdot10^{-2}$ &  $1.5$   & $11$ & $13$ &  $5.3\cdot10^{-7}$ &   $8.5$  \rule{0pt}{3ex}\\
 $2$ & $4$  & $6.6\cdot10^{-3}$ &  $3.5$   & $12$ & $14$ &  $1.8\cdot10^{-7}$ &   $10$  \\
 $3$ & $5$  & $2.2\cdot10^{-3}$ &  $5.5$   & $13$ & $15$ &  $5.7\cdot10^{-8}$ &  $11.5$  \\
 $4$ & $6$  & $6.9\cdot10^{-4}$ &  $3.5$   & $14$ & $16$ &  $2.5\cdot10^{-8}$ &   $10$  \\
 $5$ & $7$  & $2.0\cdot10^{-4}$ &   $5$    & $15$ & $17$ &  $8.6\cdot10^{-9}$ &  $11.5$  \\
 $6$ & $8$  & $8.9\cdot10^{-5}$ &   $7$    & $16$ & $18$ &  $3.1\cdot10^{-9}$ &   $13$  \\
 $7$ & $9$  & $2.8\cdot10^{-5}$ &  $8.5$   & $17$ & $19$ &  $1.3\cdot10^{-9}$ &  $11.5$  \\
 $8$ & $10$ & $1.0\cdot10^{-5}$ &  $6.5$   & $18$ & $20$ & $4.8\cdot10^{-10}$ &   $13$  \\
 $9$ & $11$ & $3.8\cdot10^{-6}$ &  $8.5$   & $19$ & $21$ & $1.9\cdot10^{-10}$ &  $14.5$  \\
$10$ & $12$ & $1.1\cdot10^{-6}$ &   $10$   & $20$ & $22$ & $8.3\cdot10^{-11}$ &   $16$  \\
\hline
\end{tabular}
\end{center}
}
\end{table}
We illustrate the bound and the necessity of the scaling numerically. We use an all-white square grey-scale picture of size $1024 \times 1024$. The Canny-like edge detector then produces the mask with all boundary pixels set to one and all interior points set to zero. That is, the compressed picture has a white boundary and all pixels in the interior are black. For this simple example, the solution of the inpainting by the heat equation can be computed at any time $t>0$ by fast transforms. In Figure~\ref{errorbound}, we show the error bound of Theorem~\ref{Kryapproxext} with the optimal choices of $\gamma$ according to Table~\ref{tab:errphi} as a black solid line
and the error of the approximation with respect to the Krylov subspace
$\mathcal{E}_{m}((\tilde{\gamma}I-A)^{-1},\vec{b})$ with $\tilde{\gamma}=\gamma_{\mbox{\scriptsize opt}} \slash t$ as in the theorem as green circle-marked line for $t=25$, $t=10^2$, and $t=10^4$, respectively.
The red diamond-marked line corresponds to the extended Krylov subspace $\mathcal{E}_m((I-A)^{-1},\vec{b})$. That is, $\gamma$ is set to one and not scaled. For $t=25$, the approximation for small dimensions of the space with $\gamma$ fixed to one is clearly worse than the error bound in contrast to the properly scaled Krylov subspace. For $t=10^2$, and $t=10^4$, the approximation of the space with $\gamma$ fixed to one does not improve for larger dimensions of the Krylov subspace, either.
\begin{figure}[!htbp]
\begin{center}
\begin{tikzpicture}[scale=1.0]
\begin{axis}[
width=4.15cm, height=2.65cm, scale only axis,
xmin=2, xmax=23, ymode=log, ymin=2e-10, ymax=2e1,
xtick={3,6,9,12,15,18, 21}, 
]
\addplot[color=black, very thick, solid] coordinates {
(  3  ,  1.3  )
(  4  ,  0.33  )
(  5  ,  0.11  )
(  6  ,  0.034499999999999996  )
(  7  ,  0.01  )
(  8  ,  0.00445  )
(  9  ,  0.0014  )
(  10  ,  0.0005  )
(  11  ,  0.00019  )
(  12  ,  5.5e-05  )
(  13  ,  2.65e-05  )
(  14  ,  9e-06  )
(  15  ,  2.8500000000000002e-06  )
(  16  ,  1.2499999999999999e-06  )
(  17  ,  4.4500000000000003e-07  )
(  18  ,  1.55e-07  )
(  19  ,  6.5e-08  )
(  20  ,  2.4e-08  )
(  21  ,  9.499999999999999e-09  )
(  22  ,  4.15e-09  )
};
\addplot[color=green, mark=*, thick, solid] coordinates {
(  3  ,  0.21691151922347554  )
(  4  ,  0.06504043476433637  )
(  5  ,  0.0202147097615134  )
(  6  ,  0.006253419733285197  )
(  7  ,  0.0015148266124797123  )
(  8  ,  0.000578614747132046  )
(  9  ,  0.00014246171152245324  )
(  10  ,  4.9611830164743985e-05  )
(  11  ,  1.64622620853198e-05  )
(  12  ,  5.935219362359168e-06  )
(  13  ,  2.039954872164645e-06  )
(  14  ,  7.431910517029416e-07  )
(  15  ,  3.289278486834353e-07  )
(  16  ,  1.186510232870206e-07  )
(  17  ,  4.673208929714668e-08  )
(  18  ,  1.9184587700545117e-08  )
(  19  ,  7.982883436305909e-09  )
(  20  ,  2.8960031508774216e-09  )
(  21  ,  1.0471551814985704e-09  )
(  22  ,  3.668536230855963e-10  )
};
\addplot[color=red, mark=diamond*, mark options={scale=1.25}, thick, solid] coordinates {
(  3  ,  1.0753977698385597  )
(  4  ,  0.7164898869472262  )
(  5  ,  0.41777620122695036  )
(  6  ,  0.21720931112024922  )
(  7  ,  0.10255752727356207  )
(  8  ,  0.04399878702586424  )
(  9  ,  0.017066053906920394  )
(  10  ,  0.005796100404134619  )
(  11  ,  0.0015922692490613498  )
(  12  ,  0.00030900682344624323  )
(  13  ,  2.65345101970989e-05  )
(  14  ,  1.0920270415985245e-05  )
(  15  ,  4.4497065022865834e-06  )
(  16  ,  5.984926147930467e-07  )
(  17  ,  1.9999492250096696e-07  )
(  18  ,  7.954297563380616e-08  )
(  19  ,  7.064883803247464e-09  )
(  20  ,  6.147158319594633e-09  )
(  21  ,  1.2162279812252343e-09  )
(  22  ,  5.181078984158503e-10  )
};
\end{axis}
\end{tikzpicture}%
\hspace{0.5cm}
\begin{tikzpicture}[scale=1.0]
\begin{axis}[
width=4.15cm, height=2.65cm, scale only axis,
xmin=2, xmax=23, ymode=log, ymin=2e-10, ymax=7e1,
xtick={3,6,9,12,15,18, 21}, 
]
\addplot[color=black, very thick, solid] coordinates {
(  3  ,  5.2  )
(  4  ,  1.32  )
(  5  ,  0.44  )
(  6  ,  0.13799999999999998  )
(  7  ,  0.04  )
(  8  ,  0.0178  )
(  9  ,  0.0056  )
(  10  ,  0.002  )
(  11  ,  0.00076  )
(  12  ,  0.00022  )
(  13  ,  0.000106  )
(  14  ,  3.6e-05  )
(  15  ,  1.1400000000000001e-05  )
(  16  ,  4.9999999999999996e-06  )
(  17  ,  1.7800000000000001e-06  )
(  18  ,  6.2e-07  )
(  19  ,  2.6e-07  )
(  20  ,  9.6e-08  )
(  21  ,  3.7999999999999996e-08  )
(  22  ,  1.66e-08  )
};
\addplot[color=green, mark=*, thick, solid] coordinates {
(  3  ,  0.4849676250876442  )
(  4  ,  0.15581150241042688  )
(  5  ,  0.05954542777562138  )
(  6  ,  0.01691642390587037  )
(  7  ,  0.0051718239154269555  )
(  8  ,  0.0023381921817699188  )
(  9  ,  0.0006760195358878612  )
(  10  ,  0.00020334471900665903  )
(  11  ,  9.988299941280762e-05  )
(  12  ,  3.076160423936319e-05  )
(  13  ,  1.346661445874937e-05  )
(  14  ,  4.4187808393856715e-06  )
(  15  ,  1.3804735201105715e-06  )
(  16  ,  5.97757717843216e-07  )
(  17  ,  1.9607592180373394e-07  )
(  18  ,  6.297638126248771e-08  )
(  19  ,  2.6602594734169195e-08  )
(  20  ,  9.086411529145418e-09  )
(  21  ,  3.1326299023979616e-09  )
(  22  ,  1.1372753457629054e-09  )
};
\addplot[color=red, mark=diamond*, mark options={scale=1.25}, thick, solid] coordinates {
(  3  ,  2.03123724446581  )
(  4  ,  1.7543430285260235  )
(  5  ,  1.4865932135682318  )
(  6  ,  1.2293426985512967  )
(  7  ,  0.9876894891316128  )
(  8  ,  0.7719481482512358  )
(  9  ,  0.5906568854653788  )
(  10  ,  0.44275223625175764  )
(  11  ,  0.32327530205698696  )
(  12  ,  0.22984579135666805  )
(  13  ,  0.15995695815316868  )
(  14  ,  0.10959282368809087  )
(  15  ,  0.07402727953846189  )
(  16  ,  0.048950899812824286  )
(  17  ,  0.03146272177221106  )
(  18  ,  0.0196838656003996  )
(  19  ,  0.012060178559753235  )
(  20  ,  0.007264696657238098  )
(  21  ,  0.00428302315258767  )
(  22  ,  0.00244995120605202  )
};
\end{axis}
\end{tikzpicture}%
\hspace{0.5cm}
\begin{tikzpicture}[scale=1.0]
\begin{axis}[
width=4.15cm, height=2.65cm, scale only axis,
xmin=2, xmax=23, ymode=log, ymin=2e-9, ymax=5e2,
xtick={3,6,9,12,15,18, 21}, 
]
\addplot[color=black, very thick, solid] coordinates {
(  3  ,  520.0  )
(  4  ,  132.0  )
(  5  ,  44.0  )
(  6  ,  13.799999999999999  )
(  7  ,  4.0  )
(  8  ,  1.7799999999999998  )
(  9  ,  0.5599999999999999  )
(  10  ,  0.2  )
(  11  ,  0.076  )
(  12  ,  0.022000000000000002  )
(  13  ,  0.0106  )
(  14  ,  0.0036  )
(  15  ,  0.00114  )
(  16  ,  0.0005  )
(  17  ,  0.00017800000000000002  )
(  18  ,  6.2e-05  )
(  19  ,  2.6000000000000002e-05  )
(  20  ,  9.6e-06  )
(  21  ,  3.7999999999999996e-06  )
(  22  ,  1.66e-06  )
};
\addplot[color=green, mark=*, thick, solid] coordinates {
(  3  ,  2.0753084434133093  )
(  4  ,  0.7443413357272933  )
(  5  ,  0.31502477782331145  )
(  6  ,  0.08290537402865651  )
(  7  ,  0.028928153739370594  )
(  8  ,  0.013718338029902494  )
(  9  ,  0.004829425690469551  )
(  10  ,  0.00134034267650733  )
(  11  ,  0.0006075913384862188  )
(  12  ,  0.0002198054501934781  )
(  13  ,  6.134563861233584e-05  )
(  14  ,  2.7280858546024892e-05  )
(  15  ,  1.1469567721598778e-05  )
(  16  ,  2.551150528958837e-06  )
(  17  ,  1.1612437140787086e-06  )
(  18  ,  5.375894185835173e-07  )
(  19  ,  1.1620852940441444e-07  )
(  20  ,  5.3089083197304324e-08  )
(  21  ,  2.461397267314388e-08  )
(  22  ,  9.82248953590402e-09  )
};
\addplot[color=red, mark=diamond*, mark options={scale=1.25}, thick, solid] coordinates {
(  3  ,  7.952783396345569  )
(  4  ,  7.861093799126167  )
(  5  ,  7.770090964745885  )
(  6  ,  7.679623060470789  )
(  7  ,  7.590124386530171  )
(  8  ,  7.5024999874814  )
(  9  ,  7.417367305746275  )
(  10  ,  7.333195202007067  )
(  11  ,  7.247579483945251  )
(  12  ,  7.160505841877801  )
(  13  ,  7.073483404109842  )
(  14  ,  6.987897685857794  )
(  15  ,  6.903861636648133  )
(  16  ,  6.81976679753056  )
(  17  ,  6.734483742173056  )
(  18  ,  6.6486859000136  )
(  19  ,  6.5636599157243705  )
(  20  ,  6.479980230280394  )
(  21  ,  6.396753167391087  )
(  22  ,  6.312788077517177  )
};
\end{axis}
\end{tikzpicture}%
\end{center}
        \caption{Error bound of Theorem~\ref{Kryapproxext} (black) solid line for the optimal choice of $\gamma$, error of the extended Krylov subspace approximation to the compressed all-white picture for the optimal $\gamma$ (green) circle-marked line, and error of the extended Krylov subspace with $\gamma$ fixed to one (red) diamond-marked line for $t=25$ on left-hand side, $t=10^2$ in the middle, $t=10^4$ on the right-hand side.} \label{errorbound}
\end{figure}
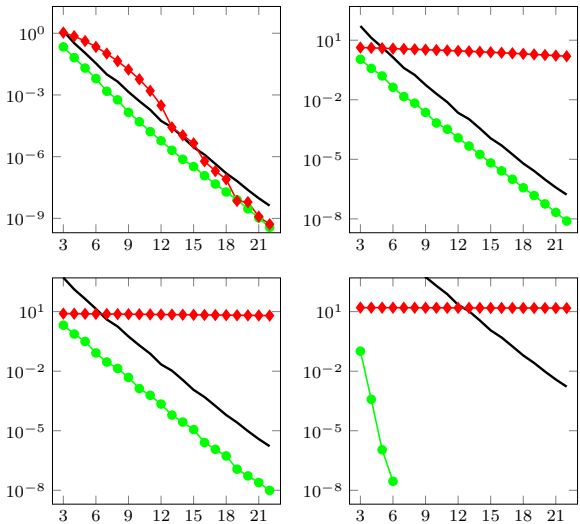
Since the eye can not distinguish very small deviations, the error bound in Theorem~\ref{Kryapproxext} and our experiment suggest that an extended Krylov subspace of a small dimension and with the optimal choice of $\gamma$ is sufficient for image processing purposes. This will be confirmed in Section~\ref{sec:numex} by numerical experiments with real pictures.


\section{Implementation details} \label{sec:multigrid}
For the efficient computation of $\vec{v}^{\vec{h}}=(\gamma I-A^{\vec{h}})^{-1}\vec{b}^{\vec{h}}$, we use the multigrid method applied to the system
\[
        B^{\vec{h}}\vec{v}^{\vec{h}}=\vec{b}^{\vec{h}}, \qquad  B^{\vec{h}} = (\gamma I-A^{\vec{h}})\,.
\]
The superscript $\vec{h}$ indicates that the matrices and vectors belong to the finest grid which corresponds to the original image.
In order to efficiently implement the multigrid method, we operate on discrete images (cf.~\cite{Mainbergeretal11,Bruhnetal05}).
For the application of the multigrid method, we look at a fine grid $\Omega^\vec{h}$ with $N^\vec{h}=N_x^\vec{h} \times N_y^\vec{h}$ pixels, where $N_x^\vec{h}$ and $N_y^\vec{h}$ correspond to the number of pixels in $x$- and $y$-direction, respectively. The grid spacing is denoted by $\vec{h}=(h_x,h_y)^T$.
On the finest grid $(h_x,h_h)^T=(1,1)^T$, which is a popular choice in image processing. For the next coarser grid, one would like to double the grid spacing in both directions. This is only possible for powers of two. In order to include other grids, we define the coarser
grid $\Omega^\vec{H}$ with the spacings $\vec{H}=(H_x,H_y)^T$, with
\[
  H_x=h_x\frac{N_x^\vec{h}}{N_x^\vec{H}}, \qquad H_y=h_y\frac{N_y^\vec{h}}{N_y^\vec{H}}\,,
\]
where $N_x^\vec{H}= \lceil N_x^\vec{h} \slash 2 \rceil$ and $N_y^\vec{H}=\lceil N_y^\vec{h} \slash 2\rceil$ are the number of pixels in each direction in the coarse grid. Thereby, $\lceil \cdot \rceil$ denotes the ceiling function.
For the restriction, the coarse pixel is the average of the fine pixels according to the area of the fine pixel that contributes to the coarse pixel. The prolongation reverses this process. For the restriction matrix $I_\vec{h}^\vec{H}$ and the prolongation matrix $I_\vec{H}^\vec{h}$, one has the relation
\[
  I_\vec{h}^\vec{H}=\alpha(I_\vec{H}^\vec{h})^T, \qquad
  \alpha=\frac{N_x^\vec{H}N_y^\vec{H}}{N_x^\vec{h}N_y^\vec{h}}=\frac{h_xh_y}{H_xH_y}\,.
\]
Restriction and prolongation are illustrated in Figure~\ref{prolres}.
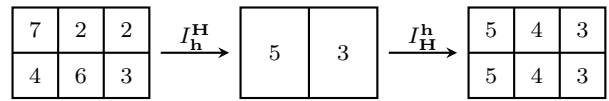
\begin{figure}[!htbp]
\begin{center}
\begin{tikzpicture}[scale=1.0]
  \draw[thick] (1,1) rectangle (4,3);
  \draw[thick] (1,2) -- (4,2);
  \draw[thick] (2,1) -- (2,3);
  \draw[thick] (3,1) -- (3,3);
  \node at (1.5,1.5) {$4$};
  \node at (2.5,1.5) {$6$};
  \node at (3.5,1.5) {$3$};
  \node at (1.5,2.5) {$7$};
  \node at (2.5,2.5) {$2$};
  \node at (3.5,2.5) {$2$};
  \draw[thick,->,>=stealth] (4.25,2) -- (5.75,2);
  \node at (5,2.35) {$I_\vec{h}^{\bf H}$};
  \draw[thick] (6,1) rectangle (9,3);
  \draw[thick] (7.5,1) -- (7.5,3);
  \node at (6.75,2) {$5$};
  \node at (8.25,2) {$3$};
 \draw[thick,->,>=stealth] (9.25,2) -- (10.75,2);
  \node at (10,2.35) {$I_{\bf H}^{\bf h}$};
  \draw[thick] (11,1) rectangle (14,3);
  \draw[thick] (11,2) -- (14,2);
  \draw[thick] (12,1) -- (12,3);
  \draw[thick] (13,1) -- (13,3);
  \node at (11.5,1.5) {$5$};
  \node at (12.5,1.5) {$4$};
  \node at (13.5,1.5) {$3$};
  \node at (11.5,2.5) {$5$};
  \node at (12.5,2.5) {$4$};
  \node at (13.5,2.5) {$3$};
\end{tikzpicture}
\caption{Example for the restriction and prolongation} \label{prolres}
\end{center}
\end{figure}
Since we have two different sorts of pixels, we also have to apply the restriction and prolongation to our binary inpainting mask $\vec{c}$. We therefore adapt the restriction of the inpainting by applying the element-wise sign function to the restricted inpainting mask
\[
 \vec{c}^\vec{H}=\sgn (I_\vec{h}^\vec{H}\vec{c}^\vec{h}).
\]
This is also illustrated in Figure~\ref{resipmask}.
\begin{figure}[!htbp]
\begin{center}
\begin{tikzpicture}[scale=1.0]
  \draw[thick] (1,1) rectangle (3,3);
  \draw[thick] (1,2) -- (3,2);
  \draw[thick] (2,1) -- (2,3);
  \node at (1.5,1.5) {$0$};
  \node at (2.5,1.5) {$0$};
  \node at (1.5,2.5) {$1$};
  \node at (2.5,2.5) {$0$};
  \node[below] at (2,1) {${\bf c}^{\bf h}$};
  \draw[thick,->,>=stealth] (3.25,2) -- (4.75,2);
  \node at (4,2.35) {$I_{\bf h}^{\bf H}$};
  \draw[thick] (5,1) rectangle (7,3);
  \node at (6,2) {$0.25$};
  \node[below] at (6,1) {$I_{\bf h}^{\bf H}{\bf c}^{\bf h}$};
  \draw[thick,->,>=stealth] (7.25,2) -- (8.75,2);
  \node at (8,2.35) {$\sgn$};
  \draw[thick] (9,1) rectangle (11,3);
  \node at (10,2) {$1$};
  \node[below] at (10,1) {${\bf c}^{\bf H}$};
\end{tikzpicture}
\caption{Example for the restriction applied to the inpainting mask} \label{resipmask}
\end{center}
\end{figure}
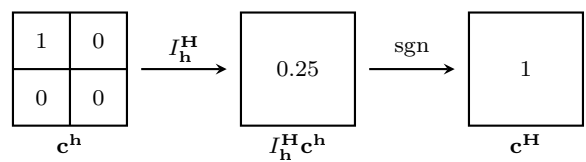
With these two definitions, we obtain a natural definition of the coarse matrix $A^\vec{H}$. We can just use again the standard stencil for the
Laplace operator with respect to the grid spacing $\vec{H}=(H_x,H_y)^T$. For the multigrid method, we also need to compute the restriction of the fine residual $\vec{r}^\vec{h}$. With the Hadamard product $\circ$, the restriction of the fine residual to the coarse residual is
\[
 \vec{r}^\vec{H}=(\overrightarrow{\vec{1}}-\vec{c}^\vec{H}) \circ (I_\vec{h}^\vec{H}\vec{r}^\vec{h}),
   \qquad
 \overrightarrow{\vec{1}}=(1,\cdots,1)^T.
\]
The residual needs to be set to zero for known pixels (known according to the coarse inpainting mask $\vec{c}^\vec{H}$). This is illustrated in Figure~\ref{rescoarsecomp}.
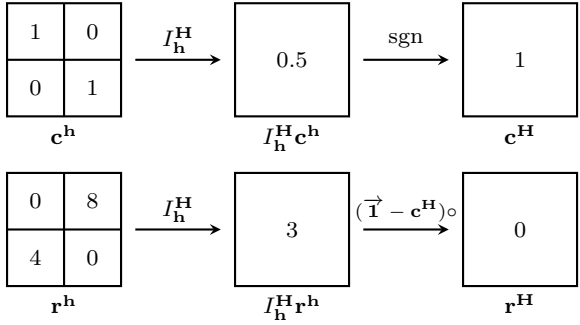
\begin{figure}[!htbp]
\begin{center}
\begin{tikzpicture}[scale=1.0]
  \draw[thick] (1,4) rectangle (3,6);
  \draw[thick] (1,5) -- (3,5);
  \draw[thick] (2,4) -- (2,6);
  \node at (1.5,4.5) {$0$};
  \node at (2.5,4.5) {$1$};
  \node at (1.5,5.5) {$1$};
  \node at (2.5,5.5) {$0$};
  \node[below] at (2,4) {${\bf c}^{\bf h}$};
  \draw[thick,->,>=stealth] (3.25,5) -- (4.75,5);
  \node at (4,5.35) {$I_{\bf h}^{\bf H}$};
  \draw[thick] (5,4) rectangle (7,6);
  \node at (6,5) {$0.5$};
  \node[below] at (6,4) {$I_{\bf h}^{\bf H}{\bf c}^{\bf h}$};
  \draw[thick,->,>=stealth] (7.25,5) -- (8.75,5);
  \node at (8,5.35) {$ \sgn $};
  \draw[thick] (9,4) rectangle (11,6);
  \node at (10,5) {$1$};
  \node[below] at (10,4) {${\bf c}^{\bf H}$};
  \draw[thick] (1,1) rectangle (3,3);
  \draw[thick] (1,2) -- (3,2);
  \draw[thick] (2,1) -- (2,3);
  \node at (1.5,1.5) {$4$};
  \node at (2.5,1.5) {$0$};
  \node at (1.5,2.5) {$0$};
  \node at (2.5,2.5) {$8$};
  \node[below] at (2,1) {${\bf r}^{\bf h}$};
  \draw[thick,->,>=stealth] (3.25,2) -- (4.75,2);
  \node at (4,2.35) {$I_{\bf h}^{\bf H}$};
  \draw[thick] (5,1) rectangle (7,3);
  \node at (6,2) {$3$};
  \node[below] at (6,1) {$I_{\bf h}^{\bf H}{\bf r}^{\bf h}$};
  \draw[thick,->,>=stealth] (7.25,2) -- (8.75,2);
  \node at (8,2.35) {\scriptsize ${\bf (\overrightarrow{{\bf 1}}-c^H)}$$\circ$};
  \draw[thick] (9,1) rectangle (11,3);
  \node at (10,2) {$0$};
  \node[below] at (10,1) {${\bf r}^{\bf H}$};
\end{tikzpicture}
\caption{Computation of the coarse residual ${\bf r}^{\bf H}$. As for the fine residual,
         the coarse residual is set to zero for known pixels.} \label{rescoarsecomp}
\end{center}
\end{figure}
In order to apply nested iteration to obtain a good starting value for the multigrid cycles, we also need a restriction for the right-hand side $\vec{b}^\vec{h}$. Here we use
\[
\vec{b}^\vec{H}=\left(I_\vec{h}^\vec{H}(\vec{c}^\vec{h} \circ \vec{b}^\vec{h}) \right)  {\bf \oslash} \left( I_\vec{h}^\vec{H}\vec{c}^\vec{h}\right)\,,
\]
where $\oslash$ is element-wise division with the exception that a division by zero leads to zero. This is illustrated in Figure~\ref{bcoarsecomp}.
\begin{figure}[!htbp]
\begin{center}
\begin{tikzpicture}[scale=1.0]
  \draw[thick] (1,4) rectangle (3,6);
  \draw[thick] (1,5) -- (3,5);
  \draw[thick] (2,4) -- (2,6);
  \node at (1.5,4.5) {$0$};
  \node at (2.5,4.5) {$1$};
  \node at (1.5,5.5) {$1$};
  \node at (2.5,5.5) {$0$};
  \node[below] at (2,4) {${\bf c}^{\bf h}$};
  \draw[thick,->,>=stealth] (3.25,5) -- (8.75,5);
  \node at (6,5.35) {$I_{\bf h}^{\bf H}$};
  \draw[thick] (9,4) rectangle (11,6);
  \node at (10,5) {$0.5$};
  \node[below] at (10,4) {$I_{\bf h}^{\bf H}{\bf c}^{\bf h}$};
  \draw[thick,->,>=stealth] (11.25,5) -- (12.75,5);
  \node at (12,5.35) {$ \sgn $};
  \draw[thick] (13,4) rectangle (15,6);
  \node at (14,5) {$1$};
  \node[below] at (14,4) {${\bf c}^{\bf H}$};
\draw[thick] (1,1) rectangle (3,3);
  \draw[thick] (1,2) -- (3,2);
  \draw[thick] (2,1) -- (2,3);
  \node at (1.5,1.5) {$2$};
  \node at (2.5,1.5) {$4$};
  \node at (1.5,2.5) {$6$};
  \node at (2.5,2.5) {$0$};
  \node[below] at (2,1) {${\bf b}^{\bf h}$};
  \draw[thick,->,>=stealth] (3.25,2) -- (4.75,2);
  \node at (4,2.35) {${\bf c}^{\bf h}\circ$};
  \draw[thick] (5,1) rectangle (7,3);
  \draw[thick] (5,2) -- (7,2);
  \draw[thick] (6,1) -- (6,3);
  \node at (5.5,1.5) {$0$};
  \node at (6.5,1.5) {$4$};
  \node at (5.5,2.5) {$6$};
  \node at (6.5,2.5) {$0$};
  \node[below] at (6,1) {${\bf c}^{\bf h}\circ{\bf b}^{\bf h}$};
  \draw[thick,->,>=stealth] (7.25,2) -- (8.75,2);
  \node at (8,2.35) {$I_{\bf h}^{\bf H}$};
  \draw[thick] (9,1) rectangle (11,3);
  \node at (10,2) {$2.5$};
  \node[below] at (10,1) {$I_{\bf h}^{\bf H}\left({\bf c}^{\bf h}\circ{\bf b}^{\bf h}\right)$};
  \draw[thick,->,>=stealth] (11.25,2) -- (12.75,2);
  \node at (12,2.35) {\scriptsize ${\bf \oslash} I_{\bf h}^{\bf H}{\bf c}^{\bf h}$};
  \draw[thick] (13,1) rectangle (15,3);
  \node at (14,2) {$5$};
  \node[below] at (14,1) {${\bf b}^{\bf H}$};
\end{tikzpicture}
\caption{Computation of the coarse right-hand side ${\bf b}^{\bf H}$} \label{bcoarsecomp}
\end{center}
\end{figure}
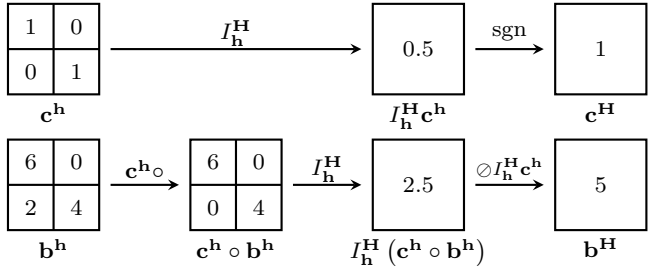
In order to motivate this choice of the restriction for the right-hand side, we illustrate the restriction
of $\vec{b}$ with the reweighting and without in Figure~\ref{Rhs}. Since division
by very small numbers might be instable, the values of $I_\vec{h}^\vec{H}\vec{c}^\vec{h}$ are set to zero
below a certain tolerance before the operator $\oslash$ is applied. More exactly, we use
\[
 c_{ij}^\vec{H}=
 \left\{
   \begin{array}{cl}
     1 & \mbox{if}~(I_\vec{h}^\vec{H}\vec{c}^\vec{h})_{ij}>\epsilon, \\
     0 & \mbox{else}.
   \end{array}
 \right.
\]
\begin{figure}[t]
        \centering
        \subfigure[$\vec{b}^{\vec{h}}, I_{\vec{h}}^{\vec{H}}\vec{b}^{\vec{h}}, \dots$ with normalisation]{\includegraphics[height=3cm]{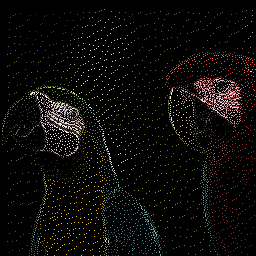} \hspace{0.5cm} \includegraphics[height=3cm]{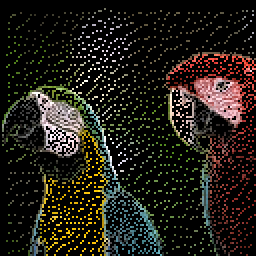} \hspace{0.5cm} \includegraphics[height=3cm]{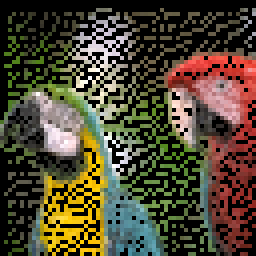} \hspace{0.5cm} \includegraphics[height=3cm]{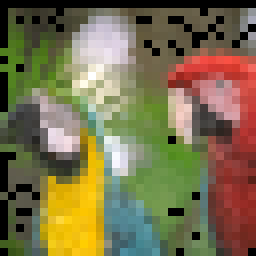}}\\
        \subfigure[$\vec{b}^{\vec{h}}, I_{\vec{h}}^{\vec{H}}\vec{b}^{\vec{h}}, \dots$ without normalisation]{\includegraphics[height=3cm]{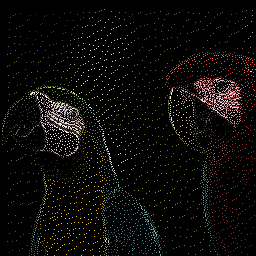} \hspace{0.5cm} \includegraphics[height=3cm]{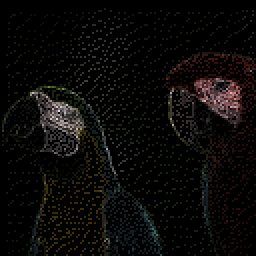} \hspace{0.5cm} \includegraphics[height=3cm]{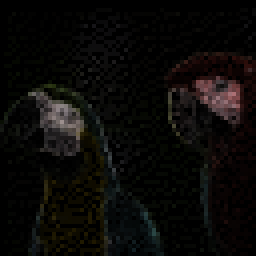} \hspace{0.5cm} \includegraphics[height=3cm]{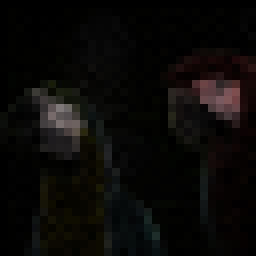}}
        \caption{Restriction of the right-hand side $\vec{b}^{\vec{h}}$. The top row shows the restriction with the reweighting, the bottom row without. Without reweighting, the colours loose intensity on the coarser grids.}
        \label{Rhs}
\end{figure}
With these preparations, we state the full multigrid method as Algorithm~\ref{alg:fmg}. It consists of nested iteration (cf. Algorithm~\ref{alg:nestiter}) for a good starting vector followed by several $\mu$-cycles (cf. Algorithm~\ref{alg:mucycle}). The $\mu$-cycle for $\mu=1$ is also called $V$-cycle and the $\mu$-cycle with $\mu=2$ is also called $W$-cycle.
\begin{algorithm}[ht]
\caption{$\mu$-cycle ($\vec{v}^\vec{h}={\tt MG}(\vec{v}^\vec{h},\vec{b}^\vec{h})$)}
\label{alg:mucycle}
\begin{algorithmic}
\If {$\Omega^\vec{h}=$ coarsest grid}
\State{solve $B^\vec{h}\vec{v}^\vec{h}=\vec{b}^\vec{h}$ by direct solver}
\Else
\State{relax $\nu_1$ times $B^\vec{h}\vec{u}^\vec{h}=\vec{b}^\vec{h}$ with given starting value $\vec{v}^\vec{h}$}
\State{$\vec{r}^\vec{H}=(\overrightarrow{{\bf 1}}-\vec{c}^\vec{H})
                  \circ
                  I_\vec{h}^\vec{H}(\vec{b}^\vec{h}-B^\vec{h}\vec{v}^\vec{h})$}
\State{$\vec{v}^\vec{H}=0$}
\State{$\vec{v}^\vec{H}={\tt MG}(\vec{v}^\vec{H},\vec{r}^\vec{H})$ $\mu$ times}
\State{correct $\vec{v}^\vec{h}=\vec{v}^\vec{h}+I_\vec{H}^\vec{h}\vec{v}^\vec{H}$}
\State{relax $\nu_2$ times $B^\vec{h}\vec{u}^\vec{h}=\vec{b}^\vec{h}$ with starting value $\vec{v}^\vec{h}$}
\EndIf
\end{algorithmic}
\end{algorithm}
\begin{algorithm}[ht]
\caption{Nested iteration ($\vec{v}^\vec{h}={\tt NI}(\vec{b}^\vec{h})$)}
\label{alg:nestiter}
\begin{algorithmic}
\If {$\Omega^\vec{h}=$ coarsest grid}
\State {solve $B^\vec{h}\vec{v}^\vec{h}=\vec{b}^\vec{h}$ by direct solver}
\Else
\State{$\vec{b}^\vec{H}=(I_\vec{h}^\vec{H}(\vec{c}^\vec{h}\circ\vec{b}^\vec{h})\oslash
                 (I_\vec{h}^\vec{H}\vec{c}^\vec{h})$}
\EndIf
\State{$\vec{v}^\vec{H}={\tt NI}(\vec{b}^\vec{H})$}
\State{prolongate $\vec{v}^\vec{h}=I_\vec{H}^\vec{h}\vec{v}^\vec{H}$\\
        $\vec{v}^\vec{h}={\tt MG}(\vec{v}^\vec{h},\vec{b}^\vec{h})$ $\nu_0$ times}
\end{algorithmic}
\end{algorithm}
\begin{algorithm}[ht]
\caption{Full multigrid ($\vec{v}^\vec{h}={\tt FMG}(\vec{v}^\vec{h},\vec{b}^\vec{h})$)}
\label{alg:fmg}
\begin{algorithmic}
\State{
  $\vec{v}^\vec{h}={\tt NI}(\vec{b}^\vec{h})$\\
  $\vec{v}^\vec{h}={\tt MG}(\vec{v}^\vec{h},\vec{b}^\vec{h})$ $k$ times
}
\end{algorithmic}
\end{algorithm}


\section{Numerical experiments}\label{sec:numex}

In this section, we present some experiments with our decoding scheme. In the first subsection, we show that the new method outperforms standard time-integration methods. That the method compares to other (linear) edge-compressing schemes is shown in the second subsection. Finally, we demonstrate the use of our scheme on a real-world device.

\subsection{Performance of the integrator}

Basically, we have to compute the solution $\vec{y}(t)$ of the system of ordinary differential equations \eqref{ode} for a large time $t$. After subdividing the interval $[0,t]$ in $n$ subintervals, the standard and most-used methods to approximate this solution are the implicit (or backward) Euler method
\begin{equation} \label{ieulmeth}
        \vec{y}(t) \approx  \left( \gamma (\gamma I -A)^{-1} \right)^n\vec{b}_0, \quad \gamma = \frac{n}{t}\,
\end{equation}
and the Crank--Nicolson method
\begin{equation} \label{CNmeth}
        \vec{y}(t) \approx \left( (\gamma I+A)(\gamma I-A)^{-1} \right)^n\vec{b}, \quad \gamma = \frac{2n}{t}\,.
\end{equation}
The larger $n$, the more accurate is the approximation.
To apply both methods, we have so solve $n$ linear systems of exactly the same type as for the Krylov method. Since this is the largest workload, we compare the methods with respect to the number of necessary solutions of linear systems of this type. For our edge-compressed all-white square test picture of Section~\ref{sec:decoding}, the relative error in the Euclidean norm is shown in Figure~\ref{errorcomp}. For $t=25$, $t=10^2$, and $t=10^4$, the error of the methods versus the number of necessary solutions of the large linear system are shown. For larger $t$, the Crank-Nicolson method becomes worse, (which is a known behaviour due to stability considerations), while the backward Euler scheme remains unaffected. For large $t$ and an approximation error of about $10^{-3}$, the implicit Euler scheme needs to solve $1000$ linear systems of the type $(\gamma I-A)\vec{x}=\vec{b}$, while the Krylov method only needs $8$. This is a factor of $125$ times faster.
This clearly demonstrates our main contribution that the Krylov method can solve homogeneous inpainting problems with a significantly improved speed.

\begin{figure}[!htbp]
\begin{center}
\begin{center}
\begin{tikzpicture}[scale=1.0]
\begin{axis}[
width=4.15cm, height=2.65cm, scale only axis,
xmin=0.5, xmax=2000, xmode=log, ymode=log, ymin=1e-10, ymax=9,
xtick={1,10,100,1000}, 
]
\addplot[color=green, very thick, solid] coordinates {
(  1  ,  0.2170175106880071  )
(  2  ,  0.06507221606835387  )
(  3  ,  0.020224587460507894  )
(  4  ,  0.006256475399091028  )
(  5  ,  0.0015155668160923065  )
(  6  ,  0.0005788974809516589  )
(  7  ,  0.00014253132392183572  )
(  8  ,  4.963607243525563e-05  )
(  9  ,  1.6470306179162817e-05  )
(  10  ,  5.938119533232197e-06  )
(  11  ,  2.040951687871985e-06  )
(  12  ,  7.435541850991014e-07  )
(  13  ,  3.29088656565675e-07  )
(  14  ,  1.1870906365665412e-07  )
(  15  ,  4.6754791744648283e-08  )
(  16  ,  1.919367127594331e-08  )
(  17  ,  7.987006815719568e-09  )
(  18  ,  2.898049252443851e-09  )
(  19  ,  1.048919488997872e-09  )
(  20  ,  3.637930660976888e-10  )
};
\addplot[color=red, mark=diamond*, mark options={scale=1.5}, thick, solid] coordinates {
(  1.0  ,  0.2822546676303295  )
(  2.0  ,  0.15471137746586308  )
(  4.0  ,  0.08114778799469276  )
(  10.0  ,  0.03340480382460082  )
(  21.0  ,  0.016065026213033913  )
(  46.0  ,  0.007369548373779786  )
(  100.0  ,  0.0033973813227506536  )
(  215.0  ,  0.0015817407646512534  )
(  464.0  ,  0.0007332564444985807  )
(  1000.0  ,  0.0003403034961808725  )
};
\addplot[color=orange, mark=square*, thick, solid] coordinates {
(  1.0  ,  0.7534118338631428  )
(  2.0  ,  0.3831238518607594  )
(  4.0  ,  0.11910356259440875  )
(  10.0  ,  0.0016665017853574836  )
(  21.0  ,  0.0002545678091066715  )
(  46.0  ,  5.305998004641422e-05  )
(  100.0  ,  1.1227709305544703e-05  )
(  215.0  ,  2.4289356420932207e-06  )
(  464.0  ,  5.215036471519117e-07  )
(  1000.0  ,  1.1227769128608907e-07  )
};
\end{axis}
\end{tikzpicture}%
\hspace{0.5cm}
\begin{tikzpicture}[scale=1.0]
\begin{axis}[
width=4.15cm, height=2.65cm, scale only axis,
xmin=0.5, xmax=2000, xmode=log, ymode=log, ymin=1e-10, ymax=9,
xtick={1,10,100,1000}, 
]
\addplot[color=green, very thick, solid] coordinates {
(  1  ,  0.4852045992650753  )
(  2  ,  0.1558876380134189  )
(  3  ,  0.05957452400386078  )
(  4  ,  0.01692468993329022  )
(  5  ,  0.005174351071273447  )
(  6  ,  0.002339334714206753  )
(  7  ,  0.000676349865604398  )
(  8  ,  0.00020344408113477308  )
(  9  ,  9.993180603257921e-05  )
(  10  ,  3.077663548288443e-05  )
(  11  ,  1.3473194899607839e-05  )
(  12  ,  4.420940122910095e-06  )
(  13  ,  1.3811481263220174e-06  )
(  14  ,  5.980497289466297e-07  )
(  15  ,  1.9617168360009294e-07  )
(  16  ,  6.300715317898217e-08  )
(  17  ,  2.6615598991409613e-08  )
(  18  ,  9.090832591003947e-09  )
(  19  ,  3.134098650329857e-09  )
(  20  ,  1.1377422588513751e-09  )
};
\addplot[color=red, mark=diamond*, mark options={scale=1.5}, thick, solid] coordinates {
(  1.0  ,  0.40126544346476506  )
(  2.0  ,  0.21970413005922093  )
(  4.0  ,  0.11518764149419394  )
(  10.0  ,  0.04740612194917566  )
(  21.0  ,  0.02279667486299723  )
(  46.0  ,  0.01045715405364893  )
(  100.0  ,  0.004820688045048943  )
(  215.0  ,  0.002244380276217178  )
(  464.0  ,  0.0010404359410252657  )
(  1000.0  ,  0.000482864293464455  )
};
\addplot[color=orange, mark=square*, thick, solid] coordinates {
(  1.0  ,  1.2316782347949817  )
(  2.0  ,  0.7540768639297181  )
(  4.0  ,  0.38639058371143814  )
(  10.0  ,  0.06493938165994598  )
(  21.0  ,  0.0008945956479579306  )
(  46.0  ,  7.543337634632812e-05  )
(  100.0  ,  1.596202923033302e-05  )
(  215.0  ,  3.4531309570059538e-06  )
(  464.0  ,  7.414031161860299e-07  )
(  1000.0  ,  1.5962117743051398e-07  )
};
\end{axis}
\end{tikzpicture}%
\hspace{0.5cm}
\begin{tikzpicture}[scale=1.0]
\begin{axis}[
width=4.15cm, height=2.65cm, scale only axis,
xmin=0.5, xmax=2000, xmode=log, ymode=log, ymin=1e-10, ymax=90,
xtick={1,10,100,1000}, 
]
\addplot[color=green, very thick, solid] coordinates {
(  1  ,  2.076322520406821  )
(  2  ,  0.7447050500591442  )
(  3  ,  0.3151787112690075  )
(  4  ,  0.08294588484173962  )
(  5  ,  0.02894228916837873  )
(  6  ,  0.013725041347199492  )
(  7  ,  0.0048317855370263955  )
(  8  ,  0.0013409976204391497  )
(  9  ,  0.0006078882314130371  )
(  10  ,  0.00021991285574603445  )
(  11  ,  6.137561449421811e-05  )
(  12  ,  2.7294189042078415e-05  )
(  13  ,  1.1475172201640726e-05  )
(  14  ,  2.552397121033674e-06  )
(  15  ,  1.1618111432423322e-06  )
(  16  ,  5.378521058303557e-07  )
(  17  ,  1.1626531344367031e-07  )
(  18  ,  5.311502460280859e-08  )
(  19  ,  2.4626000024300272e-08  )
(  20  ,  9.827289189029792e-09  )
};
\addplot[color=red, mark=diamond*, mark options={scale=1.5}, thick, solid] coordinates {
(  1.0  ,  1.2720702343155068  )
(  2.0  ,  0.696504208475263  )
(  4.0  ,  0.3651254121707742  )
(  10.0  ,  0.15023390570156445  )
(  21.0  ,  0.07223432192737929  )
(  46.0  ,  0.033132048595160085  )
(  100.0  ,  0.015273027697535625  )
(  215.0  ,  0.007110557610009544  )
(  464.0  ,  0.0032962362110398546  )
(  1000.0  ,  0.0015297698336640309  )
};
\addplot[color=orange, mark=square*, thick, solid] coordinates {
(  1.0  ,  4.390278043215469  )
(  2.0  ,  3.0667132419944556  )
(  4.0  ,  2.119402476483716  )
(  10.0  ,  1.238470551348971  )
(  21.0  ,  0.7296731764407339  )
(  46.0  ,  0.3244359490602846  )
(  100.0  ,  0.06521241981724961  )
(  215.0  ,  0.0006582644329689223  )
(  464.0  ,  2.3492379282799396e-06  )
(  1000.0  ,  5.057805298543146e-07  )
};
\end{axis}
\end{tikzpicture}%
\end{center}
\end{center}
\caption{Error of the implicit Euler scheme (red) diamond-marked line, error of the Crank-Nicolson method (orange) square-marked line, and the rational Krylov subspace method with optimal gamma (green) solid line versus number of linear system solves for $t=25$ on the left-hand side, $t=10^2$ in the middle, $t=10^4$ on the right-hand side.} \label{errorcomp}
\end{figure}
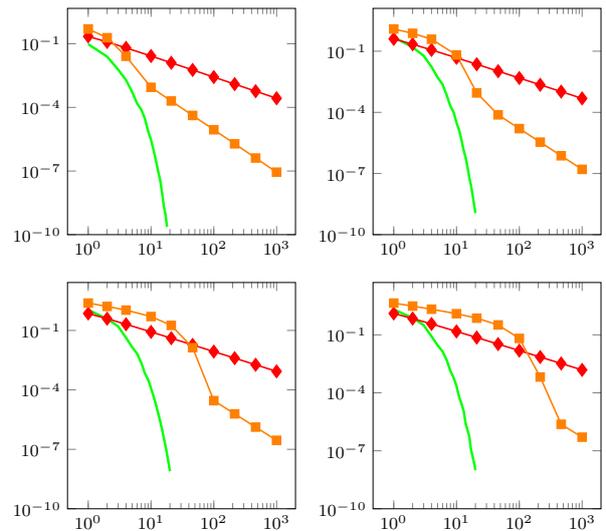

\subsection{Quality of compression}
In order to ensure that decoding the edge-compressed pictures by the Krylov method does not affect the quality of the recovered image, we provide experiments with pictures of the Kodak lossless true colour image suite (cf. Franzen~\cite{Kodak}).
In order to measure the deviation of the decoded compressed picture from the original picture, we use the mean-square error (MSE). For two colour pictures $\vec{u},\vec{v} \in \mathbb{R}^{M,N,3}$ with three colour channels and dimension $M \times N$, the mean-square error is given as
\begin{equation*}
        \operatorname{MSE}(\vec{u}, \vec{v}) \coloneqq \frac 1 {3MN}\sum_{k = 1}^3\sum_{i = 1}^M\sum_{j = 1}^N (u_{i,j,k} - v_{i,j,k})^2.
\end{equation*}
We use $\mu = 2$ in Algorithm~\ref{alg:mucycle}, which corresponds to the W-cycle, and $4$ pre- as well as post-relaxation steps. For the nested iteration in order to obtain a good starting value, we use $\nu_0=1$ in Algorithm~\ref{alg:nestiter}. For these pictures, we use $7$ levels in the multigrid method.

We found that for a large time $t=10^7$, the extended Krylov subspace with dimension $m=3$ is sufficient to provide a good reconstruction.
This means that only one solve of a linear system with the multigrid method is necessary.
For the optimal $\gamma=1.5\slash t$ and picture kodim07 of the test suite, the original picture and the reconstruction can be seen in Figure~\ref{kodim07} on the left-hand and right-hand side, respectively.
\begin{figure}[t]
        \centering
        \subfigure[original image]{\includegraphics[height=5cm]{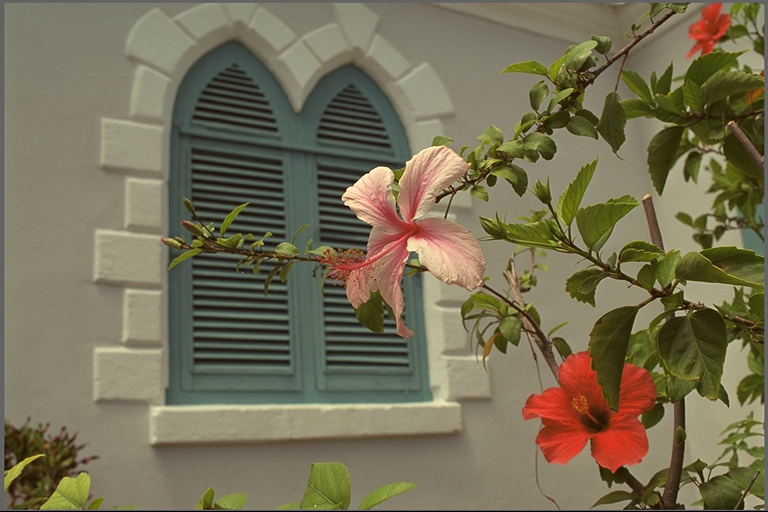}} \quad
        \subfigure[reconstruction, {\small $t=10^7$}]{\includegraphics[height=5cm]{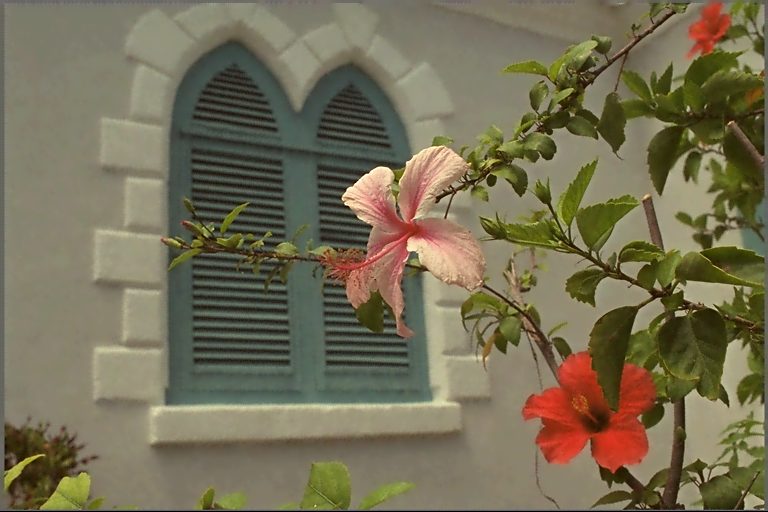}}
        \caption{(a) shows the original image and (b) shows the reconstruction by computing the solution of the heat equation up to time $t=10^7$.} \label{kodim07}
\end{figure}
The image has been compressed by the dithering-based method.
The same observation turned out to be true for the whole test set and for dithering-based as well as the edge-based compression of the images. We present the results in Table~\ref{tab:compress}. Here, we also state the peak signal-to-noise ratio (PSNR),
\begin{equation*}
\operatorname{PSNR}(\vec{u}, \vec{v}) \coloneqq 10 \cdot \log_{10} \left(\frac {255^2} {\operatorname{MSE}(\vec{u}, \vec{v})}\right) [\si{\dB}],
\end{equation*}
which is the most commonly used measure for the quality of reconstructions in lossy compression schemes.
We also state the compression rate in \emph{bits per pixels (bpp)} which refers to the average number of bits needed to encode each image pixel. The original pictures are RGB pictures using $8$ bits per colour channel which gives $24$ bpp in the original pictures.
Better values are marked in bold. The averages (avg) over all values are shown in the last row.
\begin{table}[tbhp]
  \caption{Comparison of the quality of the reconstruction for the proposed decoding scheme for dithering-based as well as edge-based compressed images of the test suite.}\label{tab:compress}
\begin{center}
\begin{tabular}{|l|ccc|ccc|l|ccc|ccc|}
\hline
\multirow{2}{*}{img} & \multicolumn{3}{c|}{dithering-based} & \multicolumn{3}{c|}{edge-based} &
\multirow{2}{*}{img} & \multicolumn{3}{c|}{dithering-based} & \multicolumn{3}{c|}{edge-based} \\
& bpp & MSE & PSNR  & bpp & MSE & PSNR &
& bpp & MSE & PSNR  & bpp & MSE & PSNR \\\hline
01 & 2.37 & \textbf{161.42} & \textbf{26.05} & \textbf{2.17} & 164.39 & 25.97 &
13 & \textbf{2.70} & \textbf{261.84} & \textbf{23.95} & 2.99 & 286.27 & 23.56 \rule{0pt}{2.5ex} \\
02 & 2.23 & \textbf{26.63} & \textbf{33.88} & \textbf{1.99} & 64.59 & 30.03 &
14 & 2.63 & \textbf{77.51} & \textbf{29.24} & \textbf{2.50} & 101.81 & 28.05 \\
03 & 2.18 & \textbf{14.23} & \textbf{36.60} & \textbf{1.66} & 50.11 & 31.13 &
15 & 2.34 & \textbf{26.98} & \textbf{33.82} & \textbf{1.86} & 74.44 & 29.41 \\
04 & 2.37 & \textbf{26.31} & \textbf{33.93} & \textbf{2.11} & 47.98 & 31.32 &
16 & 2.09 & \textbf{35.29} & \textbf{32.65} & \textbf{1.81} & 63.26 & 30.12 \\
05 & 2.74 & \textbf{144.67} & \textbf{26.53} & \textbf{2.43} & 165.88 & 25.93 &
17 & 2.28 & \textbf{22.83} & \textbf{34.55} & \textbf{1.92} & 53.52 & 30.85 \\
06 & 2.34 & \textbf{86.72} & \textbf{28.75} & \textbf{2.00} & 148.47 & 26.41 &
18 & 2.71 & \textbf{74.10} & \textbf{29.43} & \textbf{2.57} & 120.92 & 27.31 \\
07 & 2.38 & \textbf{22.65} & \textbf{34.58} & \textbf{1.45} & 62.19 & 30.19 &
19 & 2.30 & \textbf{52.76} & \textbf{30.91} & \textbf{1.70} & 106.74 & 27.85 \\
08 & 2.68 & \textbf{271.95} & \textbf{23.79} & \textbf{1.99} & 282.04 & 23.63 &
20 & 2.05 & \textbf{22.22} & \textbf{34.66} & \textbf{1.28} & 68.80 & 29.76 \\
09 & 2.16 & \textbf{20.77} & \textbf{34.96} & \textbf{1.19} & 51.70 & 31.00 &
21 & 2.36 & \textbf{49.72} & \textbf{31.17} & \textbf{1.72} & 111.10 & 27.67 \\
10 & 2.22 & \textbf{22.90} & \textbf{34.53} & \textbf{1.61} & 46.91 & 31.42 &
22 & 2.59 & \textbf{42.20} & \textbf{31.88} & \textbf{2.42} & 72.10 & 29.55 \\
11 & 2.41 & \textbf{53.89} & \textbf{30.82} & \textbf{2.06} & 86.52 & 28.76 &
23 & 2.36 & \textbf{9.86} & \textbf{38.19} & \textbf{1.88} & 39.84 & 32.13 \\
12 & 2.13 & \textbf{20.24} & \textbf{35.07} & \textbf{1.77} & 39.67 & 32.15 &
24 & 2.57 & \textbf{102.71} & \textbf{28.01} & \textbf{2.18} & 153.04 & 26.28 \\
\hline
\multicolumn{7}{c}{} & avg & 2.38 & \textbf{68.77} & \textbf{31.58} & \textbf{1.97} & 102.60 & 28.77 \rule{0pt}{2.5ex}\\
\end{tabular}
\end{center}
\end{table}
The results in Table~\ref{tab:compress} show that the decoding method is sufficiently accurate.

\begin{figure}[t]
        \centering
        \subfigure[compression] {\includegraphics[height=3.5cm]{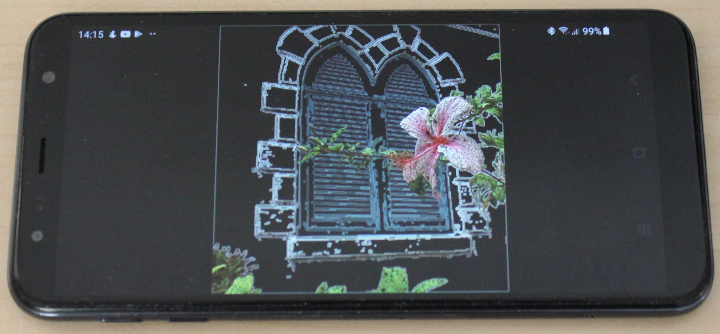}} \quad \qquad
        \subfigure[reconstruction] {\includegraphics[height=3.5cm]{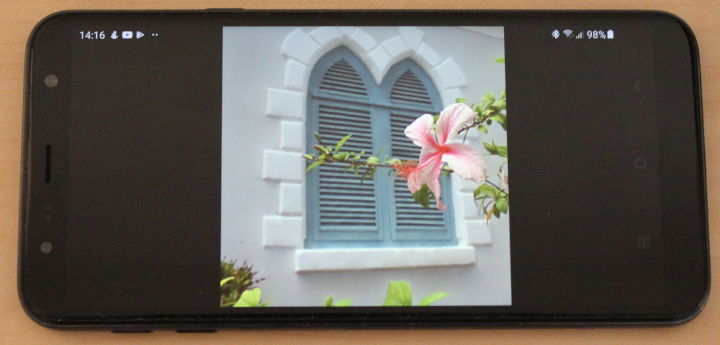}}
        \caption{Experiment on a smartphone} \label{phonetest}
\end{figure}

\subsection{Performance on an everyday device}

High compression rates are particularly important for embedded devices like smartphones, smart TV sets, and smart watches where storage is limited. Nowadays, these devices include embedded GPUs (Graphics Processing Unit) which allow to accelerate image processing tasks, considerably.
The industry standard to accelerate graphics by the use of these GPUs is OpenGL ES (Open Graphics Library for Embedded Systems) managed  by the non-profit Khronos group (cf. \cite{khronos}).
For our experiment, we will use the version OpenGL ES 3.2, which is available on $88.6\%$ of the devices running Android as of the 23rd April 2024 (cf. \cite{OpenGLdistro}) as well as the version OpenGL ES 3.1 with the extension GL\_EXT\_color\_buffer\_float which allows to render to float textures attached to a framebuffer. Our approach, with the implementation details given in Section~\ref{sec:multigrid}, perfectly fits to the OpenGL application interface. Pictures are treated as textures that are operated on in a parallel manner by the use of vertex and fragment shaders. Turning the matrices in sparse formats would not lead to algorithms that can be easily ported to embedded GPUs. We first used a desktop computer with a NVIDIA GeForce RTX 3060/PCIe/SSE2. OpenGL ES 3.2 is available on this GPU. The average time of ten runs of our program to decode
the $512 \times 512$ RGB picture  in the middle of Figure~\ref{overallidea} to the picture on the right-hand side of Figure~\ref{overallidea} was $0.014\,s$.
The dimension of system \eqref{ode} is $786432$ for this picture. On a notebook with the integrated graphics processor Intel(R) HD Graphics 620 (KBL GT2), the average time of ten runs was
$0.049\,s$. OpenGL ES 3.2 is also available on this graphics processor.
As an embedded device, we used a Samsung Galaxy J4+ smartphone running Android version 9 (pie) with a Qualcomm Adreno $308$ GPU with the same picture.
This phone allows for version OpenGL ES 3.1 with GL\_EXT\_color\_buffer\_float extension. The C code was compiled with the native development toolkit for Android systems (cf.~\cite{AndroidNDK}). The transition from the left-hand side of Figure~\ref{phonetest}
to the right-hand side of Figure~\ref{phonetest} took about $0.718\,s$ in average. With less than a second, this seems to be fast enough to decode edge-compressed pictures stored on this phone in a real-life application. We also tested our algorithm on a high-end smartphone
with a Qualcomm Adreno $740$ GPU running Android version 14. This phone supports OpenGL ES 3.2 and the decoding took $0.034\,s$ in average.
At that speed, touching the compression on the left-hand side of Figure~\ref{phonetest} immediately turns the image to the reconstruction on the right-hand side of Figure~\ref{phonetest}. The experience is smooth and one does not even recognize any slightest delay.


\section{Conclusion} \label{sec:conclusion}
We presented an efficient method to solve inpainting problems by homogeneous diffusion based on extended Krylov subspaces. The method is basically applicable to all inpainting problems of this type. We studied the problem of decoding edge-based and dithering-based compressed images, where the boundaries for the inpainting problems are especially challenging. To our best knowledge, no other method is known that can provably solve the heat equation up to a prescribed large time $t$ with such an accuracy and efficiency. 

\end{document}